\newtheorem{theorem}{Theorem}
\newtheorem{proposition}{Proposition}
\newtheorem{lemma}{Lemma}
\newtheorem{corollary}{Corollary}
\newtheorem{remark}{Remark}
\newtheorem{claim}{Claim}
\theoremstyle{definition}
\newtheorem{definition}{Definition}
\newtheorem{innercustomthm}{Example}
\let\Oldenddefinition\enddefinition
\def\enddefinition{\hfill $\triangleleft$\Oldenddefinition}%
\begin{document}
\title{The Luce Model, Regularity, and Choice Overload}
\thanks{$\dagger$ WZB Berlin. E-mail: daniele.caliari@wzb.eu}
\thanks{$\ddagger$ University of Gothenburg. E-mail: henrik@petri.se}

\author[]{Daniele Caliari$^{\dagger}$  \hspace{0.2em} \& \hspace{0.05em} Henrik Petri$^{\ddagger}$ }

\maketitle

\begin{abstract}

We characterize regularity \citep{block} within a novel stochastic model: the General Threshold Luce model [GTLM]. We apply our results to study choice overload, identified by regularity violations that impose a welfare cost on the decision-maker. Generalizing our characterization results, we identify necessary and sufficient conditions for choice overload within GTLMs and, in doing so, disentangle two well-known causes: low discriminatory power \citep{frick} and limited attention \citep{masa2017}.

\end{abstract}


\section{Introduction}

We consider a decision-maker who chooses as follows: she first selects a menu of alternatives and then randomizes over the selected alternatives using logit rules. This generalized version of the standard Luce model \citep{luce59} is called the General Luce Model [GLM].\footnote{GLMs have been extensively studied in the recent literature: \cite{ahumada2018luce}, \cite{echenique2019general}, \cite{cerreia2021canon}, \cite{horan2021stochastic}, \cite{dougan2021odds}, \cite{petri}, \cite{efegerelt}, \cite{rodrigues2024stricter}, \cite{alos2025characterization}.} In this paper, we provide behavioral foundations for regular stochastic choices, an objective that, as we will show, can be achieved within GLMs. Then, we apply our results to choice overload which is often revealed by regularity violations \citep{iyengar2000choice, iyengar2010choice, scheibehenne2010can, dean2024bettertestchoiceoverload}. We provide necessary and sufficient conditions for the identification of choice overload and disentangle two well-known causes: low discriminatory power \citep{frick} and limited attention \citep{masa2017}.

Regularity \citep{block} states that the probability of choosing an alternative (weakly) decreases when another alternative is added to the menu. While it is well-known that the Luce Model satisfies regularity, this is not necessarily true for GLMs. Our particular interest in regularity depends on three factors: (i) its ubiquity in the stochastic choice literature where it is considered "possibly the most well-known property" \citep{cerreiadeliberate}, (ii) the absence of a convincing behavioral interpretation, as we will illustrate in the literature section, and (iii) its role in revealing choice overload. 

The paper opens with a motivating result: a decision-maker has regular stochastic choices for all logit rules if and only if she is a utility maximizer [Theorem \ref{thm: 1}]. A closer look at the proof shows that the potential disagreement between first-stage choices and logit rules imposes demanding rationality conditions. To illustrate, an alternative $x$ may violate regularity because another alternative $y$, discarded in the first stage, is valued infinitely more than $x$ in the second stage (logit rule).\footnote{This observation is not novel within GLMs, see \cite{echenique2019general}, \cite{horan2021stochastic} and it is reminiscent of violations of regularity commonly reported by attention models such as \cite{masa2012}, \cite{mariotticonsideration}, \cite{masa2017}, \cite{masa2020}.} This argument carries a natural follow-up question: under which conditions does regularity hold if logit rules do not contain contradictory information w.r.t. the first-stage choices?

To answer this question, we introduce a new model: the General Threshold-Luce Model [GTLM]. The decision-maker selects the alternatives with a utility higher than a menu-dependent threshold ("good enough") and then randomizes using logit rules aligned with the first-stage choices. GTLMs combine the intuitions from our opening result with two well-known results from the literature. We say $x R y$ if facing a menu with both $x,y$ available, the decision-maker selects $x$ but not $y$. By \cite{fishburn1986axioms}, we know that a utility function that agrees with $R$ - $xRy$ implies $u(x)>u(y)$ - exists if and only if $R$ is acyclic, and by \cite{aleskerov}, we know that $R$ is acyclic if and only if the choices have a menu-dependent threshold representation. In GTLMs, this representation is paired with logit rules in which the utility functions, being aligned with $R$, do not convey the type of contradictory information uncovered in our opening result.

We provide two main characterization theorems for regularity. First, we show that there is an (aligned) logit rule that satisfies regularity if and only if the first-stage choices are path-independent (see \cite{plott1973path}, \cite{moulin}, \cite{yokote2023representation}), [Theorem \ref{thm: existence}]. Second, we show that regularity is satisfied for all (aligned) logit rules if and only if the first-stage choices also satisfy a novel property - denoted $\theta$ - which states that, when a decision-maker faces two menus, one subset of the other, and chooses common alternatives, then she should choose more alternatives from the bigger menu\footnote{Properties that introduce a constraint on the cardinality of choice sets are not new in the literature, e.g. see the model of "order-k rationality" \citep{rozen} and the law of aggregate demand (\cite{hatfield2005matching}, \cite{yokote2023representation}). Furthermore, regular uniform logit rules that we characterize with property $\theta$ were discussed, among many, by \cite{becker} and more recently by \cite{cerreia2018law}.} [Theorem \ref{theorem: R}].

Further analysis of Theorems \ref{thm: existence} and \ref{theorem: R} uncovers the role of the cardinal properties of the utility function in inducing regularity. We introduce novel definitions of concavity/convexity, denoted strong concavity/convexity, i.e. given a linear order $\succ$, a utility function is strongly concave if for any three ordered elements $x \succ y \succ z$, $u(x) - u(y) \leq u(y) - u(z)$, while it is strongly convex if $u(x) - u(y) \geq u(y) - u(z)$. Whenever the three alternatives are consecutive in the linear order then strong concavity/convexity reduces to standard concavity/convexity as defined by \cite{murota}.\footnote{Strongly concave/convex utility functions have been heavily employed in the literature but, to the best of our knowledge, these properties had not been formalized before.} We show that there are logit rules with strongly convex utility functions that always satisfy regularity. In contrast, there are logit rules with strongly concave utility functions that always violate regularity \emph{whenever property $\theta$ is violated}. This result clarifies the role of property $\theta$. In Theorem \ref{theorem: R}, property $\theta$ turns the existential quantifier into a universal quantifier and, by doing so, it ensures that regularity is satisfied in representations with strongly concave utility functions allowing, for example, the decision-maker to optimize the same utility function in both stages.

We apply our results to study choice overload. Our first step is to provide an empirical definition of choice overload within GLMs identifying it with regularity violations that induce a welfare cost to the decision-maker. Combining this novel definition with the logic underlying our characterization theorems, we show [Theorem \ref{thm: overload}] that within GTLMs choice overload occurs if and only if Sen's property $\alpha$ \citep{chernoff, sen71} is violated. Theorem \ref{thm: overload} implies, on the one hand, that the shape of the utility function becomes irrelevant in the analysis of choice overload. On the other hand, it calls for the analysis of two influential models of choice overload that are founded on violations of Sen's property $\alpha$: (i) the monotone threshold model \citep{frick} in which decision-makers discriminatory power worsens in larger menus, and (ii) the limited attention model \citep{masa2017} in which, instead, the attention span deteriorate when more alternatives become available. Leveraging our characterization of choice overload, we show that these two causes are mutually exclusive and provide testable conditions for disentangling them. First, we show that within GTLMs choice overload must be due to low discriminatory power \citep{frick} as the limited attention model exists if and only if Sen's property $\alpha$ is satisfied. Second, we show that outside GTLMs limited attention is the only explanation for choice overload and that this remains true even when Sen's property $\alpha$ is satisfied.

\subsection{Related Literature}

Regularity is most prominently recognized for being a necessary condition for Random Utility models [RUM] \citep{block} and for its fundamental role in the characterization of several influential special cases of RUM such as single-crossing RUM \citep{SCRUM}, Dual RUM \citep{mariottidual}, and the Random Expected Utility model of \cite{gul}. More recently, \cite{sprumont2022regular} also shows that regularity induces every collection of binary probabilities that satisfy the triangle inequality, a known necessary, but not sufficient, condition for binary probabilities to be rationalizable by a RUM. Its role, however, is not limited to RUM. In a lottery domain, \cite{cerreiadeliberate} show that a decision-maker deliberately randomizes using convex preferences if and only if she violates regularity. \cite{masa2020} show that violations of regularity characterize the preferences underlying Random Attention Models. Finally, Additive Perturbed Utility models \citep{fudenberg}, as well as their weaker menu-invariant version \citep{fudenberg14}, have regularity as a necessary condition.

Although regularity is ubiquitous, it lacks a convincing behavioral interpretation. Authors have often argued about regularity through the lens of deterministic properties that are considered analog.\footnote{The seminal papers are \cite{fishburn73} and \cite{fishburn78}; while more recent relevant contributions are \cite{ribeiro2020comparative}, \cite{efegerelt}, and \cite{ok2023measuring}.} More specifically, property $\alpha$ is usually considered the deterministic analog of regularity and, therefore, this latter carries the behavioral interpretation of the former, e.g. "Regularity [Monotonicity] is the stochastic analog of Chernoff Postulate 4, or equivalently, property $\alpha$" \citep{gul}; "regularity is the stochastic generalization of the familiar Chernoff's condition for deterministic choice functions" \citep{dasgupta}; or again, "regularity... is often seen as the stochastic equivalent of independence of irrelevant alternatives (IIA)" \citep{cerreiadeliberate}. In Theorems \ref{thm: existence} and \ref{theorem: R}, we show that the behavioral foundations of regularity go beyond those of property $\alpha$. At the same time, however, in Theorem \ref{thm: overload} we rationalize the common belief that regularity is the stochastic analog of property $\alpha$ because, within our model, property $\alpha$ fully characterizes a decision-maker for whom either regularity holds or its violations are never welfare detrimental.

Our characterization of regularity is routed in the literature on utility representations for boundedly rational decision-makers. Here, we would like to mention the invaluable literature reviews by \cite{aizerman} and \cite{aleskerov}, and with them all the authors who contributed to this beautiful literature. More concretely, we would like to mention a few important papers. Beyond the foundational work of \cite{aleskerov}, \cite{tyson} and \cite{frick} study specific cases of the threshold model that characterize our first-stage choices. \cite{manzini2013two} characterize a deterministic two-stage model where the first stage mirrors ours, and the second stage is the simple maximization of a utility function not necessarily aligned with the choices in the first stage. \cite{ryan2014path} characterizes path-independence in the context of preferences for opportunity while \cite{yokote2023representation} provide a different representation based on a property called "ordinal concavity". 

Finally, recent literature has focused on the characterizations of several instances of GLMs (see the Online Appendix of \cite{horan2021stochastic} for a technical review). This literature differs naturally from our work because of its focus on GLMs and not on regularity per s\'e. To briefly summarize the literature on GLMs, famously \cite{luce59} provides the first characterization of logit rules for strictly positive choice probabilities, i.e. the decision-maker selects all the alternatives in the first stage. \cite{ahumada2018luce} and \cite{echenique2019general} relax the positivity assumption and characterize logit rules imposing no structure on the choices that define the support. \cite{echenique2019general} assume that the choices are rationalizable by a partial order ("two-stage Luce model"), and \cite{echenique2019general} ("threshold Luce model") and \cite{horan2021stochastic} ("stochastic semiorder") also assume that the choices are rationalizable by a semiorder and that the logit rules are aligned. Finally, \cite{ahumada2018luce}, \cite{echenique2019general}, \cite{cerreia2021canon} and \cite{dougan2021odds} require the choice correspondence to be rationalized by a weak order characterizing a model that, as \cite{horan2021stochastic} writes, it is indistinguishable from the Random Utility Model \citep{block} and so satisfies regularity.

\section{Preliminaries}\label{sec:theory}

$X$ is a finite set, $\mathcal{X}$ is the set of all non-empty subsets (menus) of $X$ and $c: \mathcal{X} \to \mathcal{X}$ is a choice correspondence with $c(A) \subseteq A$ and $c(A) \not= \emptyset$ for all $A \subseteq X$. The map $p: X \times 2^{X} \to [0,1]$ with $p(x,A)=0$ if $x \not\in A$ and $\sum\limits_{x \in A} p(x,A) = 1$ for all $A \subseteq X$ is a stochastic choice function. A tie-break rule $\pi$ is a stochastic choice function, such that for all $x \in X$ and $A \subseteq X$: $p_{\pi \vert c}(x, A) = \pi(x, c(A))$.

\subsection{The General Luce Model}

The decision-maker chooses in two stages. First, she selects a menu of alternatives and then uses a tie-break rule to decide which alternative to effectively choose, i.e. the first stage defines the support for the tie-break rule of the second stage. General Luce Models [GLMs] focus on a specific set of tie-break rules: $u$ is a non-degenerate logit rule if there is a function $u: X \to (0,\infty)$ such that for all $x \in A$ and $A \subseteq X$:
$$p_{u \vert c}(x, A) = \begin{cases}
\mbox{  0} & \mbox{if } x \not\in c(A) \\
\frac{u(x)}{\sum_{b \in c(A)} u(b)} & \mbox{if } x \in c(A)\\ 
\end{cases}$$
A special case of logit rules is the uniform rule, denoted $U$, which assumes that all utilities are the same.
$$p_{U \vert c}(x, A) = \begin{cases}
\mbox{  0} & \mbox{if } x \not\in c(A) \\
\frac{1}{\vert c(A) \vert} & \mbox{if } x \in c(A)\\ 
\end{cases}$$
Importantly, notice that there are infinitely many logit rules for each choice correspondence while there is a bijection between choice correspondences and uniform tie-break rules.

\subsection{Regularity and the motivation for a new model}\label{sec: tie-break}

Regularity states that the probability of choosing an alternative $x$ is decreasing in set inclusion. 

\begin{definition}[Regularity]
For all $x \in A \subseteq B$, $p(x,A) \geq p(x,B)$.
\end{definition}

The following result motivates our new model. We show that bounded rationality (i.e. violations of utility maximization) induces violations of regularity for some logit rules. To show this, we introduce the two classical properties - property $\alpha$ and $\beta$ - that characterize utility maximization (see \cite{sen71}).

\begin{definition}[Property $\alpha$]
For all $x \in A \subseteq B$, $x \in c(B)$ implies $x \in c(A)$.
\end{definition}
\begin{definition}[Property $\beta$]
For all $A,B \subseteq X$ and for all $x,y \in c(A)$, if $A \subseteq B$ then $x \in c(B)$ if and only if $y \in c(B)$.
\end{definition}

\begin{theorem} \label{thm: 1}
Within GLMs, first-stage choices $c$ satisfy property $\alpha$ and $\beta$  if and only if $p_{u \vert c}$ is regular for all logit utility functions $u$.
\end{theorem}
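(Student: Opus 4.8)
The plan is to prove the two implications separately, and the engine for both is the same elementary observation: since the first-stage correspondence $c$ is fixed independently of the logit weights $u$, the only channel through which $x$'s probability can rise from $A$ to a larger menu $B$ is the denominator $\sum_{b \in c(\cdot)} u(b)$. Hence regularity is governed entirely by how $c(A)$ and $c(B)$ compare as sets, and the whole argument reduces to translating the behavioral axioms into set inclusions.

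For the ``only if'' direction I would first show that $\alpha$ and $\beta$ together force the monotonicity $c(A) \subseteq c(B)$ whenever $A \subseteq B$ and $c(B) \cap A \neq \emptyset$. Fix $x \in A \subseteq B$. If $x \notin c(B)$, then $p_{u \vert c}(x,B)=0 \le p_{u \vert c}(x,A)$ and regularity is immediate. If instead $x \in c(B)$, property $\alpha$ gives $x \in c(A)$, and then for any $y \in c(A)$ property $\beta$ (applied to the pair $x,y \in c(A)$ with $A \subseteq B$ and $x \in c(B)$) yields $y \in c(B)$; hence $c(A) \subseteq c(B)$. Because $u>0$, this inclusion gives $\sum_{b \in c(A)} u(b) \le \sum_{b \in c(B)} u(b)$, and dividing the common numerator $u(x)$ by the two denominators delivers $p_{u \vert c}(x,A) \ge p_{u \vert c}(x,B)$. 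As $u$ was arbitrary, regularity holds for every logit rule. (One could instead invoke that $\alpha$ and $\beta$ are equivalent to the weak axiom and rationalize $c$ by a weak order, but the direct set-inclusion argument is shorter and self-contained.)

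For the ``if'' direction I would argue by contraposition, exhibiting a single logit rule that fails regularity whenever $c$ violates $\alpha$ or $\beta$. If $\alpha$ fails there are $x \in A \subseteq B$ with $x \in c(B)$ but $x \notin c(A)$; then for every $u$ one has $p_{u \vert c}(x,A)=0 < u(x)/\sum_{b \in c(B)} u(b) = p_{u \vert c}(x,B)$, so regularity already breaks for all $u$. If $\alpha$ holds but $\beta$ fails, there are $A \subseteq B$ and $x,y \in c(A)$ with, say, $x \in c(B)$ and $y \notin c(B)$, so that $y \in c(A)\setminus c(B)$. Here I would put a large mass $M$ on $y$ and unit mass elsewhere: then $\sum_{b \in c(A)} u(b) = M + \lvert c(A)\rvert - 1$ while $\sum_{b \in c(B)} u(b) = \lvert c(B)\rvert$ (since $y \notin c(B)$), so for $M$ large the denominator for $A$ strictly exceeds that for $B$, forcing $p_{u \vert c}(x,A) < p_{u \vert c}(x,B)$ and violating regularity at $x$.

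I expect the $\beta$-violation case to be the only delicate point, since the $\alpha$ case is immediate and the ``only if'' direction reduces to a one-line inclusion. Its content is exactly the mechanism flagged in the introduction: the discarded alternative $y$ receives an arbitrarily large logit weight, so its removal from the choice set on passing to the larger menu inflates the probability of the surviving alternative $x$. Making ``large enough'' precise (it suffices that $p_{u \vert c}(x,A) \to 0$ while $p_{u \vert c}(x,B) = 1/\lvert c(B)\rvert$ stays bounded away from $0$) is the single quantitative step, and it is routine.
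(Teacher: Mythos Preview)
Your proof is correct and essentially identical to the paper's: the forward direction is the same set-inclusion argument (you split first on $x \in c(B)$ while the paper splits on $x \in c(A)$, which is immaterial), and your converse construction---mass $M$ on $y$ and unit elsewhere---is simply a rescaling of the paper's choice $u(y)=1$, $u(z)=\varepsilon$ for $z\neq y$. The only cosmetic difference is that the paper phrases the $\beta$-failure directly as a contradiction rather than contraposition.
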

\begin{proof}
Let $x \in A \subseteq B$. If $x \not\in c(A)$ then property $\alpha$ implies $x \not\in c(B)$ and so $p_{u \vert c}(x,A) = p_{u \vert c}(x,B)=0$ and regularity holds. If $x \in c(A)$ then either $x \in c(B)$ and by property $\beta$, $c(A) \subseteq c(B)$ and so $p_{u \vert c}(x,A) = \frac{u(x)}{\sum_{y \in c(A)} u(z)} \geq \frac{u(x)}{\sum_{z \in c(B)} u(z)} = p_{u \vert c}(x,B)$. Otherwise, $x \not\in c(B)$ and then $p_{u \vert c}(x,A)>p_{u \vert c}(x,B)=0$. 

Conversely, let $c$ be a choice correspondence and assume $p_{u \vert c}$ is regular for all logit utility functions $u$. Clearly, $c$ satisfies property $\alpha$. We now show that it also satisfies property $\beta$. Assume by contradiction that there are sets $A \subseteq B$ with $c(A) \cap c(B) \not= \emptyset$ but that there is a $y \in c(A)$ with $y \not\in c(B)$. Since $c(A) \cap c(B) \not= \emptyset$ there is a $x \in c(A)\cap c(B)$ with $x \not= y$. Let $\varepsilon>0$ be such that $\varepsilon \vert c(B) \vert < 1 - \varepsilon + \varepsilon \vert c(A) \vert$ and $u$ be a logit utility function with $u(y)=1$ and $u(x)=\varepsilon$ for all $x \in X \setminus y$. Then:
$$p_{u \vert c}(x,A) = \frac{u(x)}{\sum_{z \in c(A)} u(z)} = \frac{\varepsilon}{1 + \varepsilon (\vert c(A) \vert - 1)} < \frac{\varepsilon}{\varepsilon \vert c(B) \vert} = \frac{u(x)}{\sum_{z \in c(B)} u(z)} = p_{u \vert c}(x,B)$$
which contradicts $p_{u \vert c}$ being regular.
\end{proof}

\begin{remark} \label{rem: 1}
The proof highlights an important feature of the logit rules that may violate regularity, i.e. the contradiction is achieved with $x \in c(A) \cap c(B)$ and $y \not\in c(B)$, even if $\frac{u(y)}{u(x)} = \frac{1}{\varepsilon}$ is arbitrarily large. In other words, the decision-maker's preferences that yield the choices in the first stage can be completely misaligned with the utility function that characterizes the logit rule in the second stage. 
\end{remark}

Theorem \ref{thm: 1} raises the question of whether bounded rationality is compatible with GLMs whenever only some logit rules are required to be regular. Building on the discussion in Remark \ref{rem: 1}, we focus on the logit rules that are aligned to the following binary relation: $xRy$ if and only if $x \in c(A)$ and $y \in A \setminus c(A)$. From here, two well-known results complete the motivation for our model. First, following \cite{fishburn1986axioms}, we know that there is a utility function $u$ such that $xRy$ implies $u(x) > u(y)$ ["partial agreement"] if and only if $R$ is acyclic, i.e. the existence of a utility function in the logit rule that respects the revealed preference $R$ requires $R$ to be acyclic. Second, following \cite{aleskerov} we know that $R$ is acyclic if and only if the choices are rationalizable by a menu-dependent threshold model.

\section{The General Threshold-Luce Model}

A General Threshold-Luce Model [GTLM] is characterized by two utility functions $(v, u)$ and a threshold function $\varepsilon: 2^X \to \mathbb{R}^+$. The decision-maker selects the alternatives in the first stage as follows:
$$c(A) = \{ x \in A : \max_{y \in A} v(y) - v(x) \leq \varepsilon(A) \}$$
In the second stage, she applies a non-degenerate logit rule characterized by the utility function $u$ such that $u$ is aligned with $R$, i.e. $xRy$ implies $u(x) > u(y)$. This definition encompasses but is not limited to, the case where $v = u$.

\subsection{The existence of regularity}

Our first main result is a characterization of the existence of regular stochastic choice in GTLMs which relies on the well-known property of path-independence \citep{plott1973path}.

\begin{definition}[Path independence]
For all $A, B \subseteq X$, $c(A \cup B) = c(c(A) \cup c(B))$. 
\end{definition}

\begin{theorem} \label{thm: existence}
Within GTLMs, first-stage choices $c$ satisfy path-independence if and only if there is a utility function $u$ such that $p_{u|c}$ is regular.
\end{theorem}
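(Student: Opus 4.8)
The plan is to prove both directions of the biconditional, treating path-independence as the pivotal structural condition on first-stage choices. For the \emph{sufficiency} direction (path-independence $\Rightarrow$ existence of a regular logit rule), I would look for a canonical choice of utility function $u$ and try to verify regularity directly. Since path-independence is known (see \cite{plott1973path}, \cite{moulin}) to be equivalent to $c$ being rationalized by a \emph{heritage} (property $\alpha$) plus \emph{outcast/Aizerman} condition, my first step is to recall that path-independence decomposes into: (i) $A \subseteq B \Rightarrow c(B) \cap A \subseteq c(A)$ (property $\alpha$), and (ii) $c(A) \subseteq B \subseteq A \Rightarrow c(B) = c(A)$ (the Aizerman/outcast property). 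Property $\alpha$ handles the case $x \notin c(A)$ exactly as in the proof of Theorem \ref{thm: 1}, forcing $x \notin c(B)$ and giving $p_{u|c}(x,A) = p_{u|c}(x,B) = 0$. The substantive work is the case $x \in c(A) \subseteq B$.

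For that case, the natural idea is to exploit the \emph{alignment} freedom we have in choosing $u$: because $R$ is acyclic (guaranteed by \cite{fishburn1986axioms} and the threshold representation), I can pick $u$ to respect $R$, and additionally I have latitude to make the utilities \emph{steeply increasing} along the order so that $u(x)$ for a chosen $x$ dominates the total weight contributed by any newly added alternatives. Concretely, my second step is to engineer $u$ (for instance with geometrically growing values along a linear extension of $R$) so that for $x \in c(A)$ with $A \subseteq B$, the denominator $\sum_{z \in c(B)} u(z)$ does not exceed $\sum_{z \in c(A)} u(z)$ by more than the weight already accounted for. The key observation I would try to establish is that, under path-independence, whenever $x$ survives into both $c(A)$ and $c(B)$, the alternatives in $c(B) \setminus c(A)$ are all $R$-dominated by elements of $c(A)$, so a sufficiently convex (steeply increasing) $u$ keeps the ratio $u(x)/\sum_{c(B)} u$ at least as large as $u(x)/\sum_{c(A)} u$. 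The remaining subcase, $x \in c(A)$ but $x \notin c(B)$, is immediate since then $p_{u|c}(x,B) = 0 \leq p_{u|c}(x,A)$.

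For the \emph{necessity} direction (existence of a regular $u$ $\Rightarrow$ path-independence), I would argue contrapositively: assume path-independence fails and build a regularity violation for \emph{every} aligned $u$, or at least show no aligned $u$ can be regular. The clean approach is to use the Aizerman characterization: a failure of path-independence (given property $\alpha$, which is already built into GTLM first-stage choices via the threshold structure) manifests as sets $c(A) \subseteq B \subseteq A$ with $c(B) \supsetneq c(A)$, i.e.\ \emph{more} alternatives chosen from the \emph{smaller} menu $B$. Taking some $x \in c(B) \setminus c(A)$ and comparing $p_{u|c}(x,B)$ against $p_{u|c}(x,A) = 0$ (since $x \notin c(A)$, as $x \in A$ but $x \notin c(A)$) would exhibit $p_{u|c}(x,B) > 0 = p_{u|c}(x,A)$ with $B \subseteq A$, a direct regularity violation regardless of $u$. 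I would need to verify that GTLM first-stage choices indeed always satisfy property $\alpha$ so that the only way path-independence can fail is through the Aizerman condition; this should follow from the threshold/acyclicity structure established just before the theorem.

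The step I expect to be the main obstacle is the sufficiency argument's denominator control, i.e.\ proving that a single cleverly chosen $u$ works \emph{simultaneously} across all nested pairs $A \subseteq B$. The danger is that making $u$ steep enough to fix one pair $(A,B)$ might distort the balance for another pair; I will likely need the full strength of path-independence (not just $\alpha$ plus one-step outcast) to guarantee a coherent global ordering of the $R$-dominance relations, and then choose $u$ relative to a linear extension of $R$ with growth fast enough to dominate the worst-case accumulation of added low-utility alternatives. Establishing that such a uniform choice exists — rather than a pair-dependent one — is the crux, and I would try to reduce it to a finite system of inequalities that is always solvable precisely when path-independence holds.
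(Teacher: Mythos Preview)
Your overall architecture matches the paper's --- a geometrically growing $u(x_i)=2^i$ along a linear extension of $R$ for sufficiency, and a concrete regularity violation from a path-independence failure for necessity --- but the execution contains directional errors that would make both arguments fail as written.

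In the sufficiency direction, your ``key observation'' is inverted. Under the threshold structure, for $A \subseteq B$ every $y \in c(B)\setminus c(A)$ is chosen from $B$ while every $z \in c(A)\setminus c(B)$ is not, so $yRz$: the newly appearing elements \emph{$R$-dominate} the dropped ones, not the other way around. Correspondingly, regularity requires $\sum_{c(B)} u \geq \sum_{c(A)} u$ (so that $p(x,A)\geq p(x,B)$), which is the reverse of your stated target ``$u(x)/\sum_{c(B)} u$ at least as large as $u(x)/\sum_{c(A)} u$.'' Once the direction is flipped, the mechanism is exactly the paper's: any $y \in c(B)\setminus c(A)$ sits above all of $c(A)\setminus c(B)$ in the linear extension, and $2^{k+1} > \sum_{i \leq k} 2^i$ yields $\sum_{c(B)\setminus c(A)} u > \sum_{c(A)\setminus c(B)} u$. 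Your worry about a uniform $u$ working across all pairs is then moot; the single geometric $u$ handles every $A\subseteq B$ simultaneously.

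In the necessity direction, two issues. First, GTLM first-stage choices do \emph{not} automatically satisfy property~$\alpha$: the threshold $\varepsilon$ is menu-dependent, and $\alpha$-failures are precisely what drives the choice-overload analysis later in the paper. That case must be handled separately (it is immediate: $x\in A\subseteq B$, $x\in c(B)\setminus c(A)$ gives $p(x,A)=0<p(x,B)$, a violation since $A$ is the smaller set). Second, and more seriously, in your Outcast-failure case $c(A)\subseteq B\subseteq A$ with $c(B)\supsetneq c(A)$, the inequality you exhibit --- $p_{u|c}(x,B)>0=p_{u|c}(x,A)$ for $x\in c(B)\setminus c(A)$ with $B\subseteq A$ --- is \emph{consistent} with regularity, not a violation of it. The correct witness is any $x\in c(A)\subseteq c(B)$: then $p_{u|c}(x,B)=u(x)/\sum_{c(B)}u<u(x)/\sum_{c(A)}u=p_{u|c}(x,A)$, which does violate regularity since $B\subseteq A$. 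The paper avoids this bookkeeping altogether by invoking a general lemma (path-independence $\Leftrightarrow$ existence of \emph{some} regular tie-break rule, via \cite{aizerman} and \cite{mcclellon}); your direct route is more elementary and works once the directions are corrected.
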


\begin{proof}
The "if" part is encoded in the following lemma.
\begin{lemma} \label{lemma: pi}
Path-independence is satisfied if and only if there is a tie-break rule $\pi$ such that $p_{\pi \vert c}$ is regular.
\end{lemma}
\begin{proof}
The result follows by combining two results from the literature: (i) $c$ satisfies path-independence if and only if there exists a multi-utility representation such that $x$ is chosen whenever it is the maximal alternative for at least one utility function \citep{aizerman}; (ii) $p$ is regular if and only if there exists a collection of utility functions endowed with a super-additive measure such that the probability of choosing $x$ is equal to the measure on the utilities in which $x$ is a maximal alternative \citep{mcclellon}. We refer to these papers for the formal details.
\end{proof}

We next show the converse. By GTLMs, $c$ has a representation $c(A) = \{ x \in A : \max_{y \in A} v(y) - v(x) \leq \varepsilon(A) \}$ for all $A \subseteq X$. We use a linear extension of the binary relation $R$ to enumerate the $n$ alternatives in $X$ as $\{x_1,...,x_n\}$ and such that $i > j$ if $x_i R x_j$ for all $i,j \in \mathbb{N}$. Define a function $u$ by $u(x_i)=2^i$ for all $i$. First, notice that $u$ satisfies the inequality $u(x_{k+1})  > \sum\limits_{i=1}^{k} u(x_i)$ for all $k \geq 1$. This follows since
$$ u(x_{k+1}) - u(x_{k}) = 2^{k+1}-2^k > 2^k - 2 = \sum\limits_{i=1}^{k-1} 2^i  =  \sum\limits_{i=1}^{k-1} u(x_i).$$ 

We next show that $p_{u|c}$ is regular. Let $x \in A \subseteq B$. We split the proof into three cases. 

\vspace{2mm}
\noindent Case 1: Assume $c(B) \subseteq c(A)$. Path independence implies that $c(B) = c(A \cup B) = c(c(A) \cup c(B)) \supseteq c(c(A)) = c(A)$, so $c(A) =c(B)$. Hence, $p_{u|c}(x,A)=p_{u|c}(x,B)$ and regularity clearly holds. 

\vspace{2mm}
\noindent Case 2: Assume that $c(B) \supseteq c(A)$. In this case regularity trivially holds.

\vspace{2mm}
\noindent Case 3: Assume that $c(B) \setminus c(A) \neq \emptyset$ and $c(A) \setminus c(B) \neq \emptyset$. Then there is an $y \in c(B)\setminus c(A)$ such that $\max_{x \in B} u(x) - u(z) > \varepsilon(B) \geq \max_{x \in B} u(x) - u(y)$ for all $z \in c(A) \setminus c(B)$. This implies that $y$ corresponds to a higher natural number than each of the elements of $c(A)\setminus c(B)$. Hence, by the above reasoning, it must hold that $\sum\limits_{z \in c(B)\setminus c(A)} u(z) \geq u(y) > \sum\limits_{z \in c(A)\setminus c(B)} u(z)$. It thus follows that  $$\frac{1}{p_{u|c}(x,A)}=\frac{\sum_{z \in c(A)} u(z)} {u(x)}= \frac{\sum_{y \in c(A) \setminus c(B)} u(y)}{u(x)} + \frac{ \sum_{y \in c(A) \cap c(B)} u(y)} {u(x)} <$$ 
$$ \frac{\sum_{z \in c(B) \setminus c(A)} u(z)}{u(x)} + \frac{ \sum_{z \in c(A) \cap c(B)} u(z)} {u(x)} = \frac{\sum_{z \in c(B)} u(z)} {u(x)}= \frac{1}{p_{u|c}(x,B)},$$ which implies regularity.\footnote{Notice the importance of the non-degenerate assumption on the logit rule in our proof. If the logit rule is degenerate the restrictions on the choice correspondence weaken considerably. Consider the following property (a well-known weakening of property $\alpha$): $c$ satisfies the fixed-point property if for all $A \subseteq X$ there is a $b \in A$ such that for all $B \subseteq A$ if $b \in B$ then $b \in c(B)$. It is immediate to show that $c$ satisfies the fixed-point property if and only if there is a possibly degenerate logit tie-break rule $u$ such that $p_{u \vert c}$ is regular. Therefore, whenever path independence is violated and the fixed-point property is satisfied the only regular logit rules are the degenerate ones.}
\end{proof}

\subsection{Regularity for everyone}

Our second main result is the characterization of regularity for all decision-makers within GTLMs. Since regularity has to hold for all aligned second-stage choices, it must in particular hold in the knife-edge case when second-stage choices are uniform. To gain insights into the characterization we will therefore begin by analyzing the uniform case. In two simple examples, we show that (i) $p_{U|c}$ violates regularity even if $c$ is highly rational, and (ii) $p_{U|c}$ satisfies regularity even if $c$ is highly irrational. Since a result that covers all decision-makers must also encompass the uniform rule, the desired property must be violated in (i) and satisfied in (ii).

\begin{innercustomthm} \label{ex: ex1}
Let first-stage choices $c$ be maximized by the binary relation $\succ$\footnote{We say that first-stage choices $c$ are maximized by $\succ$ if for all non-empty $A \subseteq X$: $c(A) = \{a \in A : b \succ a \,\, \text{for no} \,\, b \in A\}$.} represented in the digraph below (i.e. $x \succ y$ if and only if there is an arrow from $x$ to $y$).\footnote{This is a simplest semiorder \citep{aleskerov_simplest}, arguably the non-rational binary relation that comes the closest to being rational.} The resulting choice probabilities are: $p_{U \vert c}(x,\lbrace x,y,z,w \rbrace) = 0.5$ and $p_{U \vert c}(x, \lbrace x,y,z \rbrace) = 0.33$, violating regularity.\footnote{\cite{horan2021stochastic} provides a different example using the simple semiorder \citep{aleskerov2002simple} on $\lbrace x,y,z,w \rbrace$: $x \succ y,z$, noticing that $p(w,\lbrace y,z,w \rbrace) < p(w,X)$.}
\begin{figure}[H]
\centering
\begin{tikzpicture}[scale=1]
\node (x2) at (12,0) {x};
\node (y2) at (14,-1) {y};
\node (z2) at (13,-1) {z};
\node (w2) at (13,0) {w};
\node (t2) at (13,-2) {t};

\draw [->] (w2)--(y2);
\draw [->] (w2)--(z2);
\draw [->] (z2)--(t2);
\draw [->] (x2)--(t2);
\end{tikzpicture}
\end{figure}

\end{innercustomthm}

In example \ref{ex: ex2}, a choice correspondence that is not rationalizable by any binary relation, and where the order $R$ in GTLMs is cyclic, still induces a $p_{U|c}$ that satisfies regularity.

\begin{innercustomthm} \label{ex: ex2}
Suppose that first-stage choices $c$ are as in the table below. Note that $z \in c(y,z) \cap c(z,w)$ but $z \not\in c(y,z,w)$ implies a violation of binariness \citep{nehring}, or equivalently Sen's $\gamma$ \citep{sen71}. Further, both $z R y$ and $y R z$, so $R$ is cyclic. However, the resulting $p_{U \vert c}$ satisfies regularity.

$$\left[\begin{array}{c|cccccccccccccccc}
A & \lbrace x,y,z,w \rbrace & \lbrace x,y,z \rbrace &\lbrace x,y,w \rbrace & \lbrace x,z,w \rbrace & \lbrace y,z,w \rbrace\\
c(A) & x,y & x,y & x,y & x,z & y,w \\
A & \lbrace x,y \rbrace & \lbrace x,z \rbrace & \lbrace x,w \rbrace & \lbrace y,z \rbrace & \lbrace y,w \rbrace & \lbrace z,w \rbrace\\
c(A) & x,y & x,z & x,w & y,z & y,w & z,w \\
\end{array}\right]$$
\end{innercustomthm} 

The structure of the uniform rule, as also expressed by our two examples, calls for a new property that involves the cardinality of the choice menus, similar to the well-known law of aggregate demand (\cite{hatfield2005matching}, \cite{yokote2023representation}).

\begin{definition}[Property $\theta$] 
$ A \subseteq B$ and $c(A) \cap c(B) \not= \emptyset$ imply $\vert c(A) \vert \leq \vert c(B) \vert$.
\end{definition}

The intuition behind property $\theta$ is simple. As regularity requires choice probabilities to decrease in set inclusion, choice probabilities become more uniform as more alternatives are added to the set. The following result, which anticipates our characterization theorem, clarifies some interesting features of property $\theta$. It shows that property $\theta$ characterizes regularity under uniform logit rules in the most general setting (outside of GTLMs) and, therefore, that within GTLMs property $\theta$ implies the existence of regularity for uniform rules.

\begin{proposition} \label{prop: uniform}
First-stage choices $c$ satisfy property $\alpha$ and $\theta$ if and only if $p_{U \vert c}$ is regular.
\end{proposition}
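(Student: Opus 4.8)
The plan is to exploit the especially rigid structure of the uniform rule, under which all choice information is captured by just two features of each menu: which alternatives lie in the choice set, and how many there are. Concretely, $p_{U\vert c}(x,A) = 1/\vert c(A)\vert$ when $x \in c(A)$ and $p_{U\vert c}(x,A)=0$ otherwise, so a regularity comparison between $A$ and $B$ with $x \in A \subseteq B$ reduces to (a) a membership question and (b) a cardinality comparison. Property $\alpha$ will govern the membership part and property $\theta$ the cardinality part, so I expect the proof to be a clean, case-based verification rather than anything requiring a delicate construction.

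For the ``only if'' direction (assuming regularity), I would establish the two properties separately. For property $\alpha$, take $x \in A \subseteq B$ with $x \in c(B)$; then $p_{U\vert c}(x,B)=1/\vert c(B)\vert>0$, and regularity forces $p_{U\vert c}(x,A)\ge p_{U\vert c}(x,B)>0$, so $x \in c(A)$. For property $\theta$, take $A \subseteq B$ with $c(A)\cap c(B)\neq\emptyset$ and pick some $x$ in this intersection; regularity gives $1/\vert c(A)\vert = p_{U\vert c}(x,A)\ge p_{U\vert c}(x,B)=1/\vert c(B)\vert$, i.e.\ $\vert c(A)\vert\le\vert c(B)\vert$. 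Note that it is essential here that $x$ lies in both choice sets so that both probabilities are strictly positive and equal to the respective reciprocals; this is exactly why property $\theta$ carries the hypothesis $c(A)\cap c(B)\neq\emptyset$.

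For the ``if'' direction (assuming $\alpha$ and $\theta$), I would fix $x \in A \subseteq B$ and split on whether $x \in c(B)$. If $x \notin c(B)$ then $p_{U\vert c}(x,B)=0\le p_{U\vert c}(x,A)$ and regularity holds trivially. If $x \in c(B)$, property $\alpha$ yields $x \in c(A)$, so $x \in c(A)\cap c(B)$; in particular this intersection is nonempty, and property $\theta$ gives $\vert c(A)\vert\le\vert c(B)\vert$, whence $p_{U\vert c}(x,A)=1/\vert c(A)\vert\ge 1/\vert c(B)\vert = p_{U\vert c}(x,B)$.

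I do not anticipate a genuine obstacle; the only point that needs care is the interplay between the two properties in the nontrivial case of the ``if'' direction, where property $\alpha$ must be invoked \emph{first} to guarantee $x \in c(A)$ before property $\theta$ can be applied to the now-nonempty intersection. The conceptual content of the result is precisely that regularity under the uniform rule decomposes into these two orthogonal requirements: $\alpha$ controls the support of the choice (preventing a positive-to-positive drop from becoming a zero) and $\theta$ controls its size (ensuring the reciprocal weights move in the right direction).
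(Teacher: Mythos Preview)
Your proposal is correct and follows essentially the same approach as the paper: both arguments exploit that $p_{U\vert c}(x,A)$ is either $0$ or $1/\vert c(A)\vert$, reducing regularity to a membership check (handled by $\alpha$) and a cardinality check (handled by $\theta$). Your two-case split on whether $x\in c(B)$ is a slightly tidier packaging than the paper's four-case enumeration, and you spell out the converse direction explicitly where the paper simply invokes the bijection between $c$ and $p_{U\vert c}$, but the underlying idea is identical.
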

\begin{proof}
Take a choice correspondence that satisfies property $\alpha $ and $\theta$. Let $A \subseteq B$. By property $\theta$, if $c(A) \cap c(B) \not= \emptyset$, $\vert c(A) \vert \leq \vert c(B) \vert$. There are four possible scenarios: (i) if $x \in c(A) \cap c(B)$ then $p_{U \vert c}(x,A) \geq p_{U \vert c}(x,B)$; (ii) if $x \in c(A)$, $x \not\in c(B)$ then $p_{U \vert c}(x,A) > p_{U \vert c}(x,B )=0$; (iii) if $x \not\in c(A)$, $x \in c(B)$ then property $\alpha$ is violated; (iv) if $c(A) \cap c(B) = \emptyset$, then $c(B) \not\subseteq A$, and regularity is satisfied since for all $x \in c(A)$, if $x \in B$ then $p_{U \vert c}(x,B)=0$. Since $c$ and $p_{U \vert c}$ are related by a bijection, the converse also holds.
\end{proof}

We are now ready to state and prove our second main result in which we show that the intersection of proposition \ref{prop: uniform} and GTLMs allows us to characterize regularity for all decision-makers. To grasp the intuition behind our result, note that within GTLMs, every logit rule that is more discriminating than the uniform rule must, by alignment, discriminate in the direction of the choices in the first stage. This implies that the uniform rule is a worst-case scenario where the utility of worse alternatives is maximized by being equal to the utility of the best alternatives. As a consequence, if regularity holds for uniform rules then it must hold for any other logit rule.

\begin{theorem} \label{theorem: R}
Within GTLMs, first-stage choices $c$ satisfy property $\alpha$ and $\theta$ if and only if $p_{u | c}$ is regular for all logit utility functions $u$.
\end{theorem}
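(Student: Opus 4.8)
The plan is to prove both directions by combining Proposition \ref{prop: uniform} with the alignment of the logit rule with $R$, making precise the ``worst-case uniform'' intuition stated just above: among all aligned rules, the uniform one maximizes the utility of the $R$-dominated alternatives, so regularity for $U$ should propagate to every aligned $u$.

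\textbf{Necessity.} Suppose $p_{u|c}$ is regular for every aligned $u$. Property $\alpha$ is immediate: if $x \in A \subseteq B$ with $x \in c(B)$ but $x \notin c(A)$, then $p_{u|c}(x,A)=0 < p_{u|c}(x,B)$ for every aligned $u$, contradicting regularity. For property $\theta$, I would argue by contradiction: if $A \subseteq B$, $c(A)\cap c(B)\neq\emptyset$, yet $|c(A)| > |c(B)|$, pick $x \in c(A)\cap c(B)$ and take an aligned utility close to the constant function, e.g. $u = \mathbf{1} + \delta u_0$ for a strictly aligned $u_0$ (which exists since $R$ is acyclic) and small $\delta>0$. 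As $\delta \to 0$, $\sum_{z \in c(A)} u(z) \to |c(A)| > |c(B)| \to \sum_{z \in c(B)} u(z)$, so for small $\delta$ we get $p_{u|c}(x,A) < p_{u|c}(x,B)$, again contradicting regularity. Equivalently, since regularity is preserved in the limit to the uniform rule, $p_{U|c}$ is regular and Proposition \ref{prop: uniform} delivers $\alpha$ and $\theta$.

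\textbf{Sufficiency.} Assume $c$ satisfies $\alpha$ and $\theta$; fix an aligned $u$ and $x \in A \subseteq B$. In the spirit of the proof of Theorem \ref{thm: existence}, I would split into three cases. If $x \notin c(A)$, property $\alpha$ forces $x \notin c(B)$, so both probabilities vanish. If $x \in c(A)\setminus c(B)$, then $p_{u|c}(x,A) > 0 = p_{u|c}(x,B)$. The substantive case is $x \in c(A)\cap c(B)$, where regularity reduces to $\sum_{z \in c(A)} u(z) \leq \sum_{z \in c(B)} u(z)$, i.e. $\sum_{z \in c(A)\setminus c(B)} u(z) \leq \sum_{w \in c(B)\setminus c(A)} u(w)$. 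This last inequality is where $\theta$ and alignment combine. First, any $z \in c(A)\setminus c(B)$ lies in $B \setminus c(B)$ (since $z \in c(A) \subseteq A \subseteq B$), while any $w \in c(B)\setminus c(A)$ lies in $c(B)$; the definition of $R$ witnessed by $B$ then gives $wRz$, so alignment yields $u(w) > u(z)$, and every element of $c(B)\setminus c(A)$ carries strictly larger utility than every element of $c(A)\setminus c(B)$. Second, since $c(A)\cap c(B)\neq\emptyset$, property $\theta$ gives $|c(A)| \leq |c(B)|$, hence $|c(A)\setminus c(B)| \leq |c(B)\setminus c(A)|$. Pairing each element of $c(A)\setminus c(B)$ with a distinct, strictly $u$-larger element of $c(B)\setminus c(A)$ and using positivity of $u$ for the surplus elements yields the desired inequality, and therefore $p_{u|c}(x,A) \geq p_{u|c}(x,B)$.

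The main obstacle is precisely this final inequality, since neither ingredient suffices alone: alignment controls only the \emph{ordering} of utilities across the two symmetric differences, while $\theta$ controls only their \emph{cardinalities}. It is their conjunction---fewer low-utility losses than high-utility gains---that forces $\sum_{c(A)} u$ below $\sum_{c(B)} u$. This is exactly the sense in which the uniform rule is the worst case: it maximizes the contribution of the $R$-dominated alternatives in $c(A)\setminus c(B)$ by pushing their utilities up to the level of the dominating ones in $c(B)\setminus c(A)$, so that if the cardinality condition $\theta$ secures regularity there, it secures it for every aligned $u$.
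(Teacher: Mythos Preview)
Your proof is correct and follows essentially the same route as the paper's: both directions hinge on the pairing argument combining property~$\theta$ (cardinality) with alignment (utility ordering), and your necessity argument via a near-uniform aligned rule $u = \mathbf{1} + \delta u_0$ is exactly the paper's construction in different notation. If anything, your derivation of $wRz$ for $w \in c(B)\setminus c(A)$ and $z \in c(A)\setminus c(B)$ directly from the definition of $R$ at menu $B$ is cleaner than the paper's detour through the threshold inequality.
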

\begin{proof}

We first show that property $\alpha$ and $\theta$ imply that $p_{u|c}$ is regular for all aligned logit rules. Let $u$ be an aligned logit rule with utility $u: X \to (0, \infty)$. Let $x \in A \subseteq B$. Then property $\alpha$ implies that $c(A) \cap c(B) \not = \emptyset$ so that by property $\theta$ we have $|c(A)\setminus c(B) | + |c(A) \cap c(B)| = |c(A)| \leq |c(B)| = |c(B) \setminus c(A)| + |c(A) \cap c(B)|$, i.e. $|c(A) \setminus c(B)| \leq |c(B) \setminus c(A)|$. Thus, there is an injection $f: c(A) \setminus c(B) \to c(B) \setminus c(A)$. Note that $R$ is acyclic, hence $\max_{x \in B} u(x) - u(y) > \varepsilon(B) \geq \max_{x \in B} u(x) - u(f(y))$ for all $y \in c(A) \setminus c(B)$. Hence,
$$\sum\limits_{y \in c(A)} u(y) = \sum\limits_{y \in c(A)\cap c(B)} u(y) + \sum\limits_{y \in c(A) \setminus c(B)} u(y) <$$
$$\sum\limits_{y \in c(A)\cap c(B)} u(y) + \sum\limits_{f(y) \in c(B) \setminus c(A)} u(f(y)) \leq \sum\limits_{y \in c(A)\cap c(B)} u(y) + \sum\limits_{y \in c(B) \setminus c(A)} u(y) = \sum\limits_{y \in c(B)} u(y)$$
The first inequality follows by $u(f(y)) > u(y)$, and the second follows by $|c(A) \setminus c(B)| \leq |c(B) \setminus c(A)|$.
Conversely, let $c$ be a correspondence and assume that $p_{u|c}$ is regular for all logit utility functions $u: X \to (0, \infty)$. It is clear that $c$ satisfies property $\alpha$. It thus suffices to show that $c$ satisfies property $\theta$. Assume, by way of contradiction, that $A \subseteq B$ and $c(A) \cap c(B) \not= \emptyset$ but that $|c(A)| > |c(B)|$. If $c(B) \setminus c(A)=\emptyset$ then regularity is clearly violated. Assume that $c(B) \setminus c(A) \neq \emptyset$. By similar arguments as above, $|c(A) \setminus c(B)| > |c(B) \setminus c(A)|$, hence there is an injection $f: c(B) \setminus c(A) \to c(A) \setminus c(B)$. Let $u$ be a utility function aligned with $R$ such that $u(x) \geq 1$ for all $x \in X$ and $| \max_{x \in X} u(x) - \min_{x \in X} u(x)| < 1/|X|$. Then:
$$\sum\limits_{y \in c(B)} u(y) = \sum\limits_{y \in c(A)\cap c(B)} u(y) + \sum\limits_{y \in c(B) \setminus c(A)} u(y) <$$
$$\sum\limits_{y \in c(A)\cap c(B)} u(y) + \sum\limits_{f(y) \in c(A) \setminus c(B)} \bigg[ u(f(y)) + \frac{1}{|X|}  \bigg] = $$ $$ \sum\limits_{y \in c(A)\cap c(B)} u(y) + \sum\limits_{f(y) \in c(A) \setminus c(B)}  u(f(y)) + \frac{|f[c(A) \setminus c(B)]|}{|X|}  <$$
$$\sum\limits_{y \in c(A)\cap c(B)} u(y) + \sum\limits_{y \in c(A) \setminus c(B)}  u(y) = \sum\limits_{y \in c(A)} u(y)$$

The first inequality follows by $| \max_{x \in X} u(x) - \min_{x \in X} u(x)| < 1/|X|$, while the second inequality follows since the image $f[c(B)\setminus c(A)]$ of $c(B)\setminus c(A)$ satisfies $|f[c(B)\setminus c(A)]| < |c(A)\setminus c(B)|$ (since $f$ is an injection) and since $u(x) \geq 1 > \frac{|f[c(A) \setminus c(B)]|}{X}$ for all $x \in [c(A)\setminus c(B)] \setminus f[c(A) \setminus c(B)] $. Hence, we have proved that a logit rule such that $p_{u |c }$ is irregular exists, a contradiction.
\end{proof}

\subsection{Representations}\label{sec: representations}

In this section, we provide representation results for GTLMs focusing on cardinal properties of the utility functions and, more specifically, we introduce novel notions of concavity and convexity. First, we discuss notions of convexity for generic discrete functions and then we adapt these definitions to our framework. In our subsequent discussion, we will use the shorthand notation $[0,n]$ for sets of the type $\{1,...,n\}$ where $n \in \mathbb{N}$.

\begin{definition}[\citep{murota}]
A utility function $f: [0,n] \to \mathbb{R}^{++}$ is \textbf{concave} if for all $m \in [0,n]$
$$f(m+1) -f(m) \leq f(m) - f(m-1)$$
and it is \textbf{convex} if for all $m \in [0,n]$
$$f(m+1) + f(m) \geq f(m) - f(m-1)$$
\end{definition}

The following is a stricter version of this property.

\begin{definition}
A utility function $f: [0,n] \to \mathbb{R}^{++}$ is \textbf{strongly concave} if for all $m,k,l \in [0,n]$ with $m > k > l$:
$$f(m) - f(k) \leq  f(k) - f(l)$$
and it is \textbf{strongly convex} if for all $m,k,l \in [0,n]$ with $m > k > l$:
$$f(m) - f(k) \geq f(k) - f(l)$$
\end{definition}

The point of the above definitions is that they only permit functions with rapidly decreasing/increasing differences. For instance, the "linear" function $f$ defined by $f(k)=k$ for all $k \in [0,n]$ is concave, but not strongly concave. The following lemma gives some important examples of strictly concave/convex functions.

\begin{lemma} \label{lemma: cc}
For all $a \geq 2$, the utility function $f(n) = k - a^{-n}$ is strongly concave, while $f(n) = k+ a^n$ is strongly convex. In the former case, we assume $k > 1$ such that $f$ is always strictly positive.
\end{lemma}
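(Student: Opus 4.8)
The plan is to verify each inequality directly from the definition of strong concavity/convexity, exploiting that both candidate functions have geometrically spaced differences, so that the multi-step gaps are controlled by the single largest step. To avoid clashing with the additive constant $k$ appearing in the statement, I will test the defining inequality on a generic triple of domain points $p > q > r$, and set $u = p - q \geq 1$ and $s = q - r \geq 1$.

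For the concave case $f(t) = k - a^{-t}$, I would compute the two differences $f(p) - f(q) = a^{-q} - a^{-p}$ and $f(q) - f(r) = a^{-r} - a^{-q}$, so that the required inequality $f(p) - f(q) \leq f(q) - f(r)$ is equivalent to $2a^{-q} \leq a^{-p} + a^{-r}$. Dividing through by the positive quantity $a^{-q}$ reduces this to $a^{-u} + a^{s} \geq 2$. Since $s \geq 1$ and $a \geq 2$, we have $a^{s} \geq a \geq 2$, while $a^{-u} > 0$, so the inequality holds. The convex case $f(t) = k + a^{t}$ is symmetric: here $f(p) - f(q) = a^{p} - a^{q}$ and $f(q) - f(r) = a^{q} - a^{r}$, so strong convexity $f(p) - f(q) \geq f(q) - f(r)$ becomes $a^{p} + a^{r} \geq 2a^{q}$, which after dividing by $a^{q}$ reduces to $a^{u} + a^{-s} \geq 2$, true for exactly the same reason. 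For positivity in the concave case, it suffices to note that $a \geq 2$ forces $a^{-t} \leq 1$ on the domain, so that $f(t) = k - a^{-t} \geq k - 1 > 0$ whenever $k > 1$.

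There is no real obstacle here; the only point worth flagging is the clean reduction of both claims to the single elementary inequality ``a geometric term with exponent at least $1$, plus a positive term, is at least $2$.'' The conceptual content is that the gap over $s$ (resp.\ $u$) steps dominates the gap over one step precisely because $a \geq 2$ makes each step at least double the previous scale; this is what makes the definitions of strong concavity/convexity—quantified over all triples, not just consecutive ones—hold, and it is also why the linear function $f(t) = t$ fails to be strongly concave.
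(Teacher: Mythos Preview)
Your proof is correct. The paper states this lemma without proof, so there is nothing to compare against; your direct verification of the defining inequalities is exactly the right approach and the reduction to $a^{s}+a^{-u}\geq 2$ (and its symmetric counterpart) is clean and valid since $p,q,r$ are integers with $p>q>r$, forcing $s,u\geq 1$.
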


We are considering utility functions with (finite) domains of alternatives, while the preceding definitions apply to subsets of natural numbers. Next, we will adapt these definitions to our framework. 

\begin{definition} The functions $u,v: X \to (0, \infty)$ are strongly concave if there is an ordering of alternatives $x_1,...,x_n$ such that the function $f: [0,n] \to \mathbb{R}^{++}$ defined by $f(m)=u(x_m)$ or $f(m) = v(x_m)$ for all $m \in [0,n]$ is strongly concave. \end{definition}

Strongly concave/convex utility functions have been frequently employed, e.g. see \cite{aleskerov} for an overview, while in the more recent literature, for instance, by \cite{frick}, \cite{natenzonhe}, and \cite{yokote2023representation}. Strong concavity is also related to the single-peakedness property of preferences (see \cite{apesteguia2017single}, \cite{petri2023binary}, \cite{valkanova2024revealed} among others for some recent applications). In particular, as we show in the appendix, strongly concave utility functions are single-peaked. 

The next result characterizes the relation between regularity and the cardinal properties of the logit utility functions $u$.

\begin{proposition} \label{pro: conc}
Within GTLMs, let path-independence be satisfied. Then:
\begin{enumerate}
    \item \label{part1: conc} there exist strongly convex utility functions $u$ such that $p_{u | c}$ is regular. \label{pro: con1}
    \item \label{part2: conc} if property $\theta$ is violated, there exist strongly concave utility functions $u$ such that $p_{u | c}$ violates regularity. \label{pro: con2} 
\end{enumerate}
\end{proposition}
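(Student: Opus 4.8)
The plan is to observe that both parts fall out of the constructions already carried out in the proofs of Theorem~\ref{thm: existence} and Theorem~\ref{theorem: R}, once one checks that the utility functions produced there are, respectively, strongly convex and can be taken strongly concave. Throughout I work with the enumeration $x_1,\dots,x_n$ of $X$ coming from a linear extension of $R$, so that $x_iRx_j$ implies $i>j$; this is the ordering with respect to which strong convexity/concavity is verified, and any increasing $f(m)=u(x_m)$ is then automatically aligned with $R$.

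For Part~\ref{part1: conc} I would simply reuse the utility function constructed in the proof of Theorem~\ref{thm: existence}, namely $u(x_i)=2^i$, which is shown there (under path-independence) to make $p_{u|c}$ regular. The only thing left is to note that $f(m)=2^m$ is strongly convex, which is the content of Lemma~\ref{lemma: cc} with $a=2$ (additive constant $0$), and is also immediate: for $m>k>l$ one has $f(m)-f(k)=2^m-2^k\geq 2^k>2^k-2^l=f(k)-f(l)$. Thus Part~\ref{part1: conc} needs no new argument beyond this observation.

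For Part~\ref{part2: conc} I would mirror the converse direction of Theorem~\ref{theorem: R}, but instantiate its ``nearly uniform'' utility function by a strongly concave one. Since property $\theta$ fails, there are menus $A\subseteq B$ with $c(A)\cap c(B)\neq\emptyset$ and $|c(A)|>|c(B)|$. Here path-independence does real work: the Case~1 reasoning in the proof of Theorem~\ref{thm: existence} shows that $A\subseteq B$ together with $c(B)\subseteq c(A)$ forces $c(A)=c(B)$, so $|c(A)|>|c(B)|$ rules out $c(B)\subseteq c(A)$; hence $c(B)\setminus c(A)\neq\emptyset$ and $|c(A)\setminus c(B)|>|c(B)\setminus c(A)|$, yielding an injection $f:c(B)\setminus c(A)\to c(A)\setminus c(B)$ with strict image. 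I then take $u(x_i)=1+\tfrac{\delta}{2}-\delta\,2^{-i}$ with $0<\delta<2/|X|$. By affine invariance of strong concavity together with Lemma~\ref{lemma: cc}, this $u$ is strongly concave; it is increasing, hence aligned with $R$; it satisfies $u(x)\geq 1$; and $|\max_x u(x)-\min_x u(x)|<\delta/2<1/|X|$. These are precisely the three properties of $u$ used in the inequality chain of the converse of Theorem~\ref{theorem: R}, which then delivers $\sum_{y\in c(B)}u(y)<\sum_{y\in c(A)}u(y)$, so that for any $x\in c(A)\cap c(B)$ we get $p_{u|c}(x,A)<p_{u|c}(x,B)$ and regularity fails.

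The main obstacle is the compatibility check in Part~\ref{part2: conc}: one must exhibit a \emph{single} $u$ that is simultaneously strongly concave, aligned with $R$, bounded below by $1$, and of range smaller than $1/|X|$. The tension is that strong concavity with a fixed base $a\geq 2$ imposes a definite gap between the successive differences, while near-uniformity demands that all differences be tiny; the resolution is that strong concavity is preserved under positive affine rescaling, so one fixes the shape $k-a^{-i}$ and then shrinks it by $\delta$ and shifts it to sit just above $1$. The remaining point worth stating explicitly is that the inequality chain of Theorem~\ref{theorem: R} uses nothing about $u$ beyond these three properties (in particular, no lower bound on its differences), which is exactly what makes the substitution of our strongly concave $u$ for the generic nearly-uniform one legitimate.
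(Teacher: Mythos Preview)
Your Part~\ref{part1: conc} is exactly what the paper does: it points back to the $u(x_i)=2^i$ construction from the proof of Theorem~\ref{thm: existence} and invokes Lemma~\ref{lemma: cc} to identify it as strongly convex.

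For Part~\ref{part2: conc} your argument is correct but the paper takes a more direct route. Rather than rescaling to fit the hypotheses of Theorem~\ref{theorem: R}'s converse, the paper works with the unrescaled $u(x_i)=k-2^{-i}$ (any $k>1$) and proves a self-contained combinatorial inequality: for any index set $S$ of size $m$ and any index set $S'$ of size at most $m-1$, one has $\sum_{i\in S}u(x_i)\geq\sum_{i=1}^{m}(k-2^{-i})>mk-1>(m-1)k\geq\sum_{j\in S'}u(x_j)$. Taking $S,S'$ to be the index sets of $c(A),c(B)$ immediately yields the regularity violation. What this buys over your approach is that no calibration to $|X|$ is needed and, more notably, path-independence is never invoked for Part~\ref{part2: conc}; the hypothesis is there only for Part~\ref{part1: conc}. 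In fact your own appeal to path-independence is dispensable too, since the residual case $c(B)\subseteq c(A)$ with $|c(A)|>|c(B)|$ gives $\sum_{c(B)}u<\sum_{c(A)}u$ for \emph{any} positive $u$. What your approach buys is economy of new argument: once you observe affine invariance of strong concavity, the entire inequality chain is borrowed verbatim from Theorem~\ref{theorem: R}.
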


\begin{proof}

Part (\ref{part1: conc}) follows by inspection of the proof of sufficiency in theorem \ref{thm: existence}. The utility function constructed there is defined as $u(x_i)=2^i$ where $x_i Rx_j$ if and only if $i > j$, and by lemma \ref{lemma: cc} this utility function is strongly convex.  
\

We next show part (\ref{part2: conc}). Define a (strongly) concave utility function by $u(x_i)= k - 2^{-i}$ for all $i \in \{1,...,n\}$ where $k > 1$. Let $m \geq 1$ and let $S$ be a set of $m$ natural numbers and $S'$ another set of natural numbers with $|S'| \leq m-1$ (where $S'$ is not necessarily a subset of $S$ in which case the inequality below is trivial). Then the following holds:
$$\sum\limits_{i \in S} u(x_i)  \geq  \sum\limits_{i=1}^{m} u(x_i) = \sum\limits_{i=1}^{m} \left[ k - 2^{-i} \right]  = mk - \sum\limits_{i=1}^{m} 2^{-i} > mk -1 >  (m-1)k > \sum\limits_{j \in S'} u(x_j).$$
The first inequality holds since we may enumerate $S$ as $\{s_1,...,s_m\}$ s.t. $s_i  \geq i$ for all $i \geq 1$ and hence $k-2^{-s_i} \geq k - 2^{-i}$ for all $i \geq 1$.  The third (in)equality follows because $\sum\limits_{i=1}^{m} 2^{-i} < \sum\limits_{i=1}^{\infty} 2^{-i} = 1$, and the final inequality follows because $|S'| \leq m-1$ and $k-2^{-i} \leq k$ for all $i\geq1$.

Since property $\theta$ is violated there are subsets $A \subseteq B \subseteq X$ such that $c(A) \cap c(B) \neq \emptyset$ and $|c(A)| > |c(B)|$. Let $S = \{i \geq 1: x_i \in c(A)\}$ and $S'= \{i \geq 1: x_i \in c(B)\}$, then $|S| > |S'|$ and by the above inequality it then follows that $$\sum_{z \in c(A)} u(z) =\sum_{i \in S} u(x_i)   > \sum_{i \in S'} u(x_i) = \sum_{z \in c(B)} u(z),$$ which is a violation or regularity. \end{proof}

On the one hand, proposition \ref{pro: conc} shows that there are strongly convex utility functions that never violate regularity and, therefore, completes the reasoning in the sufficiency of theorem \ref{thm: existence}. On the other hand, it shows that there are strongly concave utility functions compatible with regularity only if property $\theta$ is satisfied. This part complements theorem \ref{theorem: R} where, by turning the existential quantifier into a universal one, property $\theta$ limits the role of the threshold as a confounder. This allows regularity and concavity to co-exist harmoniously as concavity can be transferred from the utility function in the first stage to the one in the second stage, i.e. $u=v$.

\section{Application: choice overload}


There is extensive literature that reveals choice overload via violations of regularity, e.g. see \cite{iyengar2010choice}. However, our previous section has shown that regularity may fail even when choice overload plays no role. To reconcile our results with the existing literature, we analyze the welfare effects of regularity violations and identify choice overload only with those violations that have potentially negative consequences for the decision-maker. Finally, we provide testable conditions to disentangle two influential causes of choice overload: (i) discriminatory power represented by the monotone threshold model of \cite{frick}, and limited attention represented by the model of \cite{masa2017}. 

\subsection{Welfare analysis of regularity violations}

We begin by defining a measure of welfare using a standard expected utility approach for discrete choice models \citep{williams1977formation, small1981applied, train2009discrete}. Within GLMs, for all menus $A \subseteq X$: 
$$W(A) = \sum\limits_{a \in A} u(a) p(a, A)$$
where 
$$p(a, A) = \frac{u(a)}{\sum\limits_{b\in c(A)} u(b)}$$
\begin{definition}[Welfare dominance]
For all $A, B \subseteq X$, we say that $A$ \textbf{welfare dominates} $B$ if and only if $W(A) \geq W(B)$ for all logit utility functions $u$.   
\end{definition}
Since within GTLMs, we focus only on utility functions aligned with $R$, the definition naturally holds only for these utility functions. Nonetheless, note that we ensure that the dual-self nature $(u,v)$ of GTLMs does not impact our welfare measure as welfare dominance must hold both if $u = v$ and $u \not= v$. 

Our next step is to define a first-order stochastic dominance relation. Fix a linear extension $\succ$ of $R$ and define the upper contour set of $a$ as $a^{\uparrow} = \{ b \in X: b \succeq a \}$. The cumulative stochastic choice function w.r.t. $\succ$ becomes:
$$\Gamma^{\succ}_{p}(a,A) = \sum\limits_{b \in a^{\uparrow}} p(a,A)$$
Using this definition and letting $p(A)$ indicate the stochastic choice function defined on the menu $A$:
\begin{definition}[FOSD]
For all $A, B \subseteq X$, we say $p(A)$ first-order stochastically dominates $p(B)$ w.r.t. $\succ$ if 
$$\Gamma^{\succ}_{p}(a, A) \geq \Gamma^{\succ}_{p}(a, B) \hspace{0.5em} \text{for all $a \in X$}$$
and we say that $p(A)$ first-order stochastically dominates $p(B)$ if this is true for all linear extensions $\succ$ of $R$.
\end{definition}
The following lemma combines our definition of welfare-dominance relation to first-order stochastic dominance.

\begin{lemma}[\cite{fishburn1974convex}, Theorem 1] \label{lemma_fosd}
The following are equivalent. For all menus $A, B \subseteq X$,
\begin{enumerate}
    \item $p(A)$ first-order stochastically dominates $p(B)$.
    \item $W(A) \geq W(B)$ for all logit utility functions aligned with $R$.
    \item $A$ welfare dominates $B$.
\end{enumerate}
\end{lemma}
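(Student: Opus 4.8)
The plan is to read this as the finite-support, first-order stochastic-dominance theorem, where the only genuinely new feature relative to the classical statement is the passage from the partial order $R$ to the family of its linear extensions. I would first dispose of the equivalence (2) $\Leftrightarrow$ (3), which is essentially definitional: within GTLMs the welfare-dominance relation quantifies precisely over the logit utilities aligned with $R$, so ``$A$ welfare dominates $B$'' and ``$W(A)\geq W(B)$ for all aligned $u$'' assert the same thing. The substantive content is therefore (1) $\Leftrightarrow$ (2), which I would obtain from Fishburn (1974, Theorem 1) once the correct dictionary between aligned utilities and monotone utilities is set up.

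The key auxiliary fact I would establish first is the dictionary itself. Every utility $u$ aligned with $R$ is weakly increasing along some linear extension $\succ$ of $R$: order the alternatives by decreasing $u$-value and break any ties (which by alignment can occur only between $R$-incomparable alternatives) in any way consistent with $R$; the resulting linear order extends $R$ and makes $u$ weakly monotone. Conversely, any utility that is strictly increasing along a linear extension $\succ$ of $R$ is aligned, since $R\subseteq\,\succ$ forces $xRy\Rightarrow x\succ y\Rightarrow u(x)>u(y)$. This is what permits me to trade the quantifier ``for all aligned $u$'' against ``for all linear extensions $\succ$, for all $\succ$-increasing $u$.''

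With this dictionary in hand, (1) $\Rightarrow$ (2) is the easy direction: given an aligned $u$, pick a linear extension $\succ$ along which $u$ is weakly increasing; since (1) supplies FOSD with respect to that particular $\succ$, the classical monotonicity of expectations under first-order dominance yields $\sum_{x} u(x)\,p(x,A)\geq \sum_{x} u(x)\,p(x,B)$, i.e. $W(A)\geq W(B)$. For (2) $\Rightarrow$ (1) I would fix a linear extension $\succ$ and a threshold alternative $a$, and test (2) against the step utility $\mathbf{1}[\,\cdot\succeq a\,]$, whose welfare integral equals exactly $\Gamma^{\succ}_{p}(a,\cdot)$; the welfare inequality then becomes the required cumulative inequality $\Gamma^{\succ}_{p}(a,A)\geq\Gamma^{\succ}_{p}(a,B)$, and ranging over all $a$ and all $\succ$ delivers FOSD in the sense of the definition. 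The common engine behind both directions is the summation-by-parts identity that rewrites $\sum_{x}u(x)\,p(x,\cdot)$ in terms of the cumulatives $\Gamma^{\succ}_{p}(\cdot)$ and the successive nonnegative differences of $u$ along $\succ$, so that the two expectations become comparable term by term.

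The main obstacle is reconciling the strictness built into the definition of alignment with the merely weak monotonicity of the test functions. The step function $\mathbf{1}[\,\cdot\succeq a\,]$ is not itself aligned, so I cannot feed it directly into (2). I would resolve this by perturbation: replace it with $\mathbf{1}[\,\cdot\succeq a\,]+\delta\,g$, where $g$ is a fixed strictly $\succ$-increasing function, which is strictly $\succ$-monotone and hence aligned; applying (2) to this perturbed utility and letting $\delta\to 0$ recovers the cumulative inequality for the step function itself. This perturbation-and-limit device, together with the quantifier bookkeeping across linear extensions of $R$, is where all the real work sits; the underlying term-by-term comparison is precisely the content of Fishburn (1974, Theorem 1) that I would ultimately invoke.
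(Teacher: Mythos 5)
The paper does not actually prove this lemma---it is imported wholesale from \cite{fishburn1974convex}---so the comparison here is against the classical result you are reconstructing. Your dispatch of (2) $\Leftrightarrow$ (3) as definitional, your dictionary between utilities aligned with $R$ and utilities monotone along linear extensions of $R$, and your summation-by-parts argument for (1) $\Rightarrow$ (2) (which only ever needs the ``diagonal'' case where the test utility equals the logit utility generating the probabilities) are all sound and are exactly what the citation is meant to deliver.

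The gap is in (2) $\Rightarrow$ (1), and it comes from a feature of the paper's welfare functional that your proposal silently suppresses: here $p(a,A)=u(a)/\sum_{b\in c(A)}u(b)$ is built from the same utility that enters the welfare weights, so $W(A)=\sum_{a\in c(A)}u(a)^{2}/\sum_{b\in c(A)}u(b)$ is \emph{not} the linear functional $u\mapsto\sum_{a}u(a)p(a,A)$ evaluated against a fixed distribution, which is what Fishburn's theorem and your step-function test require. Concretely, with $u_{\delta}=\mathbf{1}[\,\cdot\succeq a\,]+\delta g$ the probabilities move with $\delta$, and as $\delta\to 0$ both $W(A)$ and $W(B)$ converge to $\mathbf{1}[\,c(\cdot)\cap a^{\uparrow}\neq\emptyset\,]$ rather than to the cumulatives $\Gamma^{\succ}_{p}(a,\cdot)$; the perturbation-and-limit device therefore detects only whether $a^{\uparrow}$ meets the choice sets and recovers none of the cumulative inequalities. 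To close the direction you must either (i) commit to the reading in which $p$ is the fixed observed stochastic choice function and only the welfare weights vary---then your argument is the classical one and goes through---or (ii) prove the coupled version directly, i.e.\ show that if cumulative dominance fails for \emph{some} aligned $u_{0}$, extension $\succ$ and alternative $a$, then there exists a (possibly different) aligned utility whose own welfare ranks $A$ strictly above $B$; this coupled version is the one the paper actually invokes in the converse step of Theorem \ref{thm: overload}, and it does not follow from the fixed-$p$ statement you prove. As written, your proof establishes a different lemma from the one the paper applies.
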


Leveraging lemma \ref{lemma_fosd}, we can now distinguish violations of regularity that are "welfare-increasing" and "welfare-decreasing".

\begin{definition}
Given two menus $A \subseteq B$, we say that a violation of regularity is \textbf{welfare-increasing} if and only if $B$ welfare dominates $A$. Otherwise, we say that it is \textbf{welfare-decreasing}.      
\end{definition}
    
In other words, we apply the conservative approach of considering welfare improving only violations of regularity that leave every decision-maker better off. Since these violations can hardly be a result of limited cognitive resources, we identify choice overload only with violations of regularity that are welfare-decreasing. 

\begin{definition}
Within GLMs, we say that a stochastic choice function $p$ reveals choice overload if and only if there is a welfare-decreasing violation of regularity.
\end{definition}

Our main result of this section is a characterization of choice overload within GTLMs. Note that throughout the proof, we take special care in specifying whether a property called the Outcast condition \citep{aizerman1981general, aleskerov} is violated. 
\begin{definition}[Outcast condition]
For all $A, B \subseteq X$, if $c(B) \subseteq A \subseteq B$ then $c(A) = c(B)$.
\end{definition}
We do so because it is well-known that path independence is equivalent to the intersection between property $\alpha$ and the Outcast condition \citep{aizerman, moulin} and our result will allow us to understand how different failures of path independence have different implications for the welfare of the decision-maker.

\begin{theorem} \label{thm: overload}
Within GTLMs, property $\alpha$ is violated if and only if $p$ reveals choice overload.
\end{theorem}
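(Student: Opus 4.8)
The plan is to prove both directions by connecting property $\alpha$ to the welfare-dominance machinery built in Lemma \ref{lemma_fosd}, so that the whole argument runs through first-order stochastic dominance rather than through direct utility computations. Recall that within GTLMs, path independence is equivalent to the conjunction of property $\alpha$ and the Outcast condition. So when property $\alpha$ holds, the only way regularity can fail is through a failure of the Outcast condition (or property $\theta$, via Theorem \ref{theorem: R}), and I will want to show that such failures are always welfare-\emph{increasing}; conversely, when property $\alpha$ fails, I will manufacture a regularity violation that is welfare-\emph{decreasing}.

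For the ``if'' direction I would argue the contrapositive: assume property $\alpha$ holds and show that no welfare-decreasing violation of regularity can occur. The key observation is that under property $\alpha$, for any $A \subseteq B$ we have $c(A) \cap c(B) \neq \emptyset$, and more importantly $c(B) \cap A \subseteq c(A)$ (this is exactly what property $\alpha$ gives). The plan is to show that whenever regularity is violated for such $A \subseteq B$, passing to the larger menu $B$ can only shift probability mass \emph{upward} in the order $\succ$: any alternative in $c(B)$ that is not in $c(A)$ must be $R$-superior to the alternatives of $c(A)$ it displaces (by alignment and the threshold representation, as in the proof of Theorem \ref{thm: existence}, Case 3). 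This means that the cumulative function $\Gamma^{\succ}_p(\cdot, B)$ dominates $\Gamma^{\succ}_p(\cdot, A)$, so by Lemma \ref{lemma_fosd} the menu $B$ welfare dominates $A$, and the violation is welfare-increasing by definition. Hence no welfare-decreasing violation exists and $p$ does not reveal choice overload.

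For the ``only if'' direction, I would suppose property $\alpha$ is violated: there exist $x \in A \subseteq B$ with $x \in c(B)$ but $x \notin c(A)$. The plan is to use this to build a welfare-decreasing regularity violation directly. Since $x \in c(B)\setminus c(A)$, regularity is already violated in the weak sense at $x$ between $A$ and $B$ if $x \in A$, but more usefully I want to find a pair of menus witnessing FOSD in the \emph{wrong} direction. The natural candidate is to compare a menu where some high-$R$ alternative is present and chosen against one where its removal (consistent with the $\alpha$ violation, i.e. dropping to a submenu where a good alternative is \emph{not} chosen) pulls probability mass downward in $\succ$. Concretely, because $x \notin c(A)$ while $x \in c(B)$ and $x \in A$, the alternatives that $A$ does select in place of $x$ are not uniformly $R$-above $x$, so shrinking from $B$ to $A$ moves mass to $R$-inferior alternatives; this yields $\Gamma^{\succ}_p(\cdot, A) \not\geq \Gamma^{\succ}_p(\cdot, B)$ for some linear extension, i.e. $A$ does not welfare dominate $B$, making the (weak) regularity failure at $x$ welfare-decreasing via Lemma \ref{lemma_fosd}.

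The main obstacle I anticipate is the ``only if'' direction: establishing that an $\alpha$-violation produces a genuine \emph{welfare-decreasing} violation requires care because welfare dominance must fail for \emph{all} aligned $u$ simultaneously (equivalently, FOSD must fail). I will need to verify that the alternatives $c(A)$ substitutes for the missing $x$ are not all $R$-above $x$, so that the cumulative distribution genuinely drops at the level of $x$ in the order; this is precisely where alignment of the logit rule with $R$ and the threshold structure of $c$ must be invoked to guarantee the FOSD failure holds for every linear extension $\succ$ of $R$, rather than merely for a carefully chosen utility. The ``if'' direction, by contrast, should reduce cleanly to the monotone upward mass-shift already implicit in the proofs of Theorems \ref{thm: existence} and \ref{theorem: R}.
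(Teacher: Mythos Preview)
Your ``if'' direction (property $\alpha$ holds $\Rightarrow$ every regularity violation is welfare-increasing) is essentially the paper's approach: show that under $\alpha$, any $y\in c(B)\setminus c(A)$ sits $R$-above every $z\in c(A)\setminus c(B)$ (threshold structure), so moving from $A$ to $B$ shifts mass upward and $p(B)$ FOSD $p(A)$. That plan is fine, though you will still need the case split the paper uses (in particular the case $c(B)\subseteq c(A)$, where nothing is ``displaced'' but $c(A)$ has extra low elements).

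Your ``only if'' direction, however, contains a genuine directional error. You write that when $x\in c(B)\setminus c(A)$ with $x\in A$, ``the alternatives that $A$ does select in place of $x$ are not uniformly $R$-above $x$'' and that ``shrinking from $B$ to $A$ moves mass to $R$-inferior alternatives.'' In a GTLM this is exactly backwards: since $x\in A\setminus c(A)$, the threshold representation gives $v(a)>v(x)$ for every $a\in c(A)$, hence $aRx$ and $u(a)>u(x)$ for every aligned $u$. So \emph{all} of $c(A)$ sits strictly $R$-above $x$, and passing from $B$ to $A$ moves mass \emph{upward}, not downward. Correspondingly, the inequality you aim for, ``$\Gamma^{\succ}_p(\cdot,A)\not\geq \Gamma^{\succ}_p(\cdot,B)$, i.e.\ $A$ does not welfare dominate $B$,'' is the wrong target: a welfare-\emph{decreasing} violation requires that $B$ fail to welfare dominate $A$, i.e.\ that $p(B)$ fail to FOSD $p(A)$.

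The fix is to use the very fact you tried to deny. Let $a_*$ be the $\succ$-minimal element of $c(A)$. Since every $a\in c(A)$ satisfies $a\succ x$, we have $\Gamma^{\succ}_p(a_*,A)=1$, whereas $\Gamma^{\succ}_p(a_*,B)\leq 1-p(x,B)<1$ because $x$ lies strictly below $a_*$. Thus $p(B)$ does not FOSD $p(A)$, and by Lemma~\ref{lemma_fosd} there is an aligned $u$ with $W(A)>W(B)$; the regularity violation at $x$ is welfare-decreasing. This is exactly the paper's argument, and it works for \emph{every} linear extension of $R$ precisely because the threshold structure forces $c(A)$ uniformly above $x$.
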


\begin{proof}

\noindent First, we show that property $\alpha$ implies that all regularity violations are welfare-increasing. Suppose that regularity is violated, i.e. there are $A \subseteq B$ and $a \in A$ with $\rho(a, B) > \rho(a, A)$. We split into several cases.

\vspace{5mm}

\noindent Case 1a. Assume that $c(B) \subseteq A \subseteq B$ and $c(A) \neq c(B)$ (so the Outcast condition is violated). Property $\alpha$ and $c(B) \subseteq A$ implies $c(B) \subseteq c(A)$. Hence, $c(B) \not= c(A)$ implies that there is an element $a \in c(A) \setminus c(B)$, i.e. $c(A)$ is a strict superset of $c(B)$. Within GTLMs, property $\alpha$ implies that for all $b \in c(B)$ and $a \in c(A) \setminus c(B)$, $v(b) > v(a)$ (this does not need property $\alpha$). This follows because if $a \in c(A) \setminus c(B)$ then $\theta(B) > v(a) \geq \theta(A)$. If $b \in c(B)$ then $v(b) \geq \theta(B)$, hence $v(b) > v(a)$.  We next show that $p(B)$ first-order stochastically dominates $p(A)$. Let  $b_*$ be the $\succ$ minimal alternative in $c(B)$. For all $a \in X$ with $b_* \succ a$ we then have $\Gamma^{\succ}_{p}(a,B) = \sum\limits_{b \in a^{\uparrow}} p(b,B) =1  \geq \Gamma^{\succ}_{p}(a,A).$ Next, note that for all $a \in c(B)$ we have $\{b \in c(B) : b \succsim a\} =\{b \in c(A) : b \succsim a\}$, which follows since $c(B) \subset c(A)$ and since for all $b \in c(A) \setminus c(B)$ and $a \in c(B) $ we have $v(a) > v(b)$. Thus, if $a \in c(B)$ we have

$$\Gamma^{\succ}_{p}(a,B) =  \sum\limits_{b \in a^{\uparrow}} p(b,B) =  \frac{\sum_{b \in c(B) : b \succsim a} u(b) }{\sum_{b \in c(B)} u(b)} \geq $$ $$ \frac{\sum_{b \in c(A) : b \succsim a} u(b) }{\sum_{b \in c(B)} u(b)}  >  \frac{\sum_{b \in c(A) : b \succsim a} u(b) }{\sum_{b \in c(A)} u(b) }= \Gamma^{\succ}_{p}(a,A),$$ 
where the strict inequality follows since $c(A)$ is a strict superset of $c(B)$ and $u$ is strictly positive. 

\vspace{5mm}

\noindent Case 1b. Assume that $c(B) \subseteq A \subseteq B$ and $c(A) = c(B)$, but then it is immediate that $p(B)$ first-order stochastically dominates $p(A)$ (cumulative probabilities, hence welfare, are equal).

\vspace{5mm}

\noindent Case 2. Assume that $c(B) \not \subseteq A$. Since regularity is violated, we must have $c(A) \cap c(B) \neq \emptyset$ (i.e. if $a \in A \subseteq B$ and $\rho(a,B) > \rho(a,A)$ then $a \in c(B)$ hence property $\alpha$ implies that $a \in c(A)$). First note that for all $b \in c(B)$ and $a \in c(A)\setminus c(B)$ we have $v(b) > v(a)$ (by reasoning above).  Further, since regularity is violated for all $a \in c(A) \cap c(B)$ we have \begin{equation} \label{eq: regv} \frac{u(a)}{\sum_{b \in c(B)}u(b)} > \frac{u(a)}{\sum_{b \in c(A)} u(b)}.\end{equation}  Let $b_*$ be the $\succ$ minimal alternative in $c(B)$. For all $a \in X$ with $b_* \succ a$ we then have $\Gamma^{\succ}_{p}(a,B) = \sum\limits_{b \in a^{\uparrow}} p(b,B) =1  \geq \Gamma^{\succ}_{p}(a,A).$ Let $a^* \in X$ with $a^* \succsim b_*$. Note that $u(a^*) > u(b)$ for all $b \in c(A) \setminus c(B)$. This implies that if $c(A) \cap \{ a \in A: a \succsim a^*\} = (c(A) \cap c(B)) \cap \{ a \in A: a \succsim a^*\}$.    Now, if $(c(A) \cap c(B) ) \cap \{ a \in A: a \succsim a^*\}= \emptyset$ then $\Gamma^{\succ}_{p}(a^*,B)= \sum\limits_{b \in {a^*}^{\uparrow}} p(b,B) \geq 0  = \sum\limits_{b \in {a^*}^{\uparrow}} p(b,A) = \Gamma^{\succ}_{p}(a^*,A).$ If  if $(c(A) \cap c(B) ) \cap \{ a \in A: a \succsim a^*\} \neq \emptyset$ then 

$$\Gamma^{\succ}_{p}(a^*,B)= \sum\limits_{b \in {a^*}^{\uparrow}} p(b,B) = \frac{\sum_{b \in c(B) \cap c(A): b \succsim a^* }u(b) +\sum_{b \in c(B) \setminus c(A): b \succsim a^* }u(b)  } {\sum_{b \in c(B) }u(b)  }  \geq $$ $$\frac{\sum_{b \in c(B) \cap c(A): b \succsim a^* }u(b) } {\sum_{b \in c(B) }u(b)  }  > \frac{\sum_{b \in c(B) \cap c(A): b \succsim a^* }u(b) } {\sum_{b \in c(A) }u(b)  } = \Gamma^{\succ}_{p}(a^*,A),$$ where the strict inequality follows since regularity is violated (the inequalities in equation (\ref{eq: regv})).



\vspace{5mm}

\noindent Conversely, suppose that property $\alpha$ is violated, we show that $p(B)$ does not first-order stochastically dominates $p(A)$. If property $\alpha$ is violated, then there is an element $b^* \in c(B) \setminus c(A)$ with $b^* \in A$. In particular, this implies that $u(a) \geq \theta(A) > u(b^*) \geq \theta(B)$ for all $a \in c(A)$. Let $a_*$ be the $\succ$ minimal alternative in $c(A)$. Then $\Gamma^{\succ}_{p}(a_*,A)= \sum\limits_{b \in {a_*}^{\uparrow}} p(b,A) = 1 > 1-p(b^*,B)  \geq \sum\limits_{b \in {a_*}^{\uparrow}} p(b,B) = \Gamma^{\succ}_{p}(a^*,B).$ 
Thus $p(B)$ does not first-order stochastically dominate $p(A)$. By lemma \ref{lemma_fosd} there is a utility function such that $W(A) > W(B)$.
\end{proof}

Theorem \ref{thm: overload} has two immediate consequences: (i) the cardinal properties of the utility function do not affect the welfare of the decision-maker; (ii) combined with theorem \ref{thm: existence}, it shows that property $\alpha$ is satisfied if and only either regularity holds or its violations are welfare-increasing; an observation that sheds new light on the relationship between these two properties. 

Finally, we conclude with two observations, encoded in corollary \ref{cor: alpha1}. The failures of the two components of path independence (property $\alpha$ and the Outcast condition) yield different welfare consequences for the decision-maker. Specifically, the failure of property $\alpha$ negatively impacts welfare while that of the Outcast condition leaves welfare unaffected. Second, we show how these failures map directly into the discriminatory power of the decision-maker as whenever property $\alpha$ fails the discriminatory power deteriorates in larger menus while if only the Outcast condition fails the opposite holds true.

\begin{corollary} \label{cor: alpha1}
Within GTLMs:
\begin{itemize}
    \item if property $\alpha$ is satisfied and the Outcast condition is violated at $A \subseteq B$ then $\varepsilon(A) \geq \varepsilon(B)$;
    \item if property $\alpha$ is violated at $A \subseteq B$ then $\varepsilon(A) < \varepsilon(B)$.
\end{itemize}
\end{corollary}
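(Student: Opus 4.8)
The plan is to argue directly from the threshold representation $c(A) = \{x \in A : \max_{y \in A} v(y) - v(x) \le \varepsilon(A)\}$, translating each membership or non-membership fact into an inequality between $v(x)$ and the relevant menu maximum, and then reading off the comparison of $\varepsilon(A)$ and $\varepsilon(B)$. Throughout, write $M_A = \max_{y \in A} v(y)$ and $M_B = \max_{y \in B} v(y)$; since $A \subseteq B$ we always have $M_A \le M_B$. In both bullets the whole argument reduces to isolating a single witnessing alternative and chaining two inequalities.

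For the second bullet (property $\alpha$ violated at $A \subseteq B$), I would take the witness supplied by the failure of $\alpha$: there is $x \in A$ with $x \in c(B)$ but $x \notin c(A)$. Membership $x \in c(B)$ gives $\varepsilon(B) \ge M_B - v(x)$, and non-membership $x \notin c(A)$ gives $\varepsilon(A) < M_A - v(x)$. Combining these with $M_A \le M_B$ yields $\varepsilon(A) < M_A - v(x) \le M_B - v(x) \le \varepsilon(B)$, which is exactly the claim $\varepsilon(A) < \varepsilon(B)$. No further structure is needed here.

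For the first bullet (property $\alpha$ holds and the Outcast condition fails at $A \subseteq B$, i.e. $c(B) \subseteq A \subseteq B$ but $c(A) \ne c(B)$), the key preliminary step is to establish $M_A = M_B$. This follows because the $v$-maximizer $y^* = \argmax_{y \in B} v(y)$ satisfies $M_B - v(y^*) = 0 \le \varepsilon(B)$, so $y^* \in c(B) \subseteq A$; hence $M_A \ge v(y^*) = M_B$, and together with $M_A \le M_B$ this gives $M_A = M_B =: M$. Next, exactly as in Case 1a of the proof of Theorem \ref{thm: overload}, property $\alpha$ together with $c(B) \subseteq A$ forces $c(B) \subseteq c(A)$, so $c(A) \ne c(B)$ produces a witness $a \in c(A) \setminus c(B)$. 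Membership $a \in c(A)$ gives $\varepsilon(A) \ge M - v(a)$, while $a \notin c(B)$ gives $\varepsilon(B) < M - v(a)$; chaining these delivers $\varepsilon(A) \ge M - v(a) > \varepsilon(B)$, which is even stronger than the stated $\varepsilon(A) \ge \varepsilon(B)$.

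The only genuinely nontrivial step is the identity $M_A = M_B$ in the first bullet: it prevents the two menu maxima from drifting apart and is what keeps the threshold comparison clean. It rests on the observation that the top $v$-alternative always survives first-stage choice and, under the Outcast hypothesis $c(B) \subseteq A$, already lies in $A$. Everything else is a direct substitution into the representation, so I expect no real obstacle beyond securing this equality of maxima.
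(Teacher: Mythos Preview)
Your argument is correct. The paper states Corollary~\ref{cor: alpha1} without proof, so there is no explicit proof to compare against; your direct reading of the threshold representation is the natural argument and matches the inequality manipulations the paper uses elsewhere (e.g., in Case~1a of Theorem~\ref{thm: overload}, where the same membership/non-membership translation into threshold inequalities appears).
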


\subsection{On the causes of choice overload}

\subsubsection{Choice overload by \cite{frick}}

Corollary \ref{cor: alpha1} sketched the consequences of property $\alpha$ on the discriminatory power of the decision-maker in GTLMs. We now investigate this topic in more detail. In a recent paper, \cite{frick} described a decision-maker whose discriminatory power is decreasing in set inclusion, i.e. the decision-maker selection becomes more "permissible" as more alternatives are added to the menu. Let $x Q y$ if and only if there exists a menu $A$ such that $y \in A$ and $c(A) \not\subseteq c(A \cup \{ x \})$.

\begin{claim}[\cite{frick}, Theorem 1] \label{cl: frick}
The binary relation $S = R \vee Q$ is acyclic if and only if there is a pair $(v,\varepsilon)$ that represents $c$ such that $v$ is strongly convex and $\varepsilon(A) \leq \varepsilon(B)$ if $A \subseteq B$.
\end{claim}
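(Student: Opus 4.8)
The plan is to prove both implications by relating the relations $R$ and $Q$ to the strict order induced by $v$, so that acyclicity of $S = R \vee Q$ becomes exactly the obstruction to a consistent monotone-threshold representation.

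For the "only if" direction I would fix a representing pair $(v,\varepsilon)$ with $\varepsilon$ monotone and show $S \subseteq \{(x,y) : v(x) > v(y)\}$, which is acyclic as the strict part of the order induced by a real-valued function. If $x R y$, then $x \in c(A)$ and $y \in A \setminus c(A)$ for some $A$, so $\max_{z\in A} v(z) - v(x) \le \varepsilon(A) < \max_{z\in A} v(z) - v(y)$ and hence $v(x) > v(y)$. If $x Q y$, adding $x$ to some menu $A \ni y$ drops a previously chosen $w$; comparing the threshold inequalities for $w$ in $A$ and in $A \cup \{x\}$ and invoking $\varepsilon(A \cup \{x\}) \ge \varepsilon(A)$ forces $\max_{z \in A \cup \{x\}} v(z) > \max_{z\in A} v(z)$, i.e. $v(x) > \max_{z\in A} v(z) \ge v(y)$. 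Note this direction uses only that $v$ is real-valued and $\varepsilon$ monotone, not strong convexity.

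For the "if" direction I would fix a linear extension $\succ$ of the acyclic relation $S$, enumerate $X = \{x_1, \dots, x_n\}$ with $x_i \succ x_j \iff i > j$, and set $v(x_i) = 2^i$, which is strongly convex by Lemma \ref{lemma: cc}. Because $R \subseteq S \subseteq \succ$, each $c(A)$ is a $\succ$-top-segment of $A$, so the thresholds admissible for a single menu $A$ form an interval $[\ell(A), h(A))$, where $\ell(A) = \max_{z\in A} v(z) - \min_{x \in c(A)} v(x)$ and $h(A) = \max_{z\in A} v(z) - \max_{x \in A \setminus c(A)} v(x)$ (with $h(A) = +\infty$ if $c(A) = A$). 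I would then take the monotone closure $\varepsilon(A) = \max_{B \subseteq A} \ell(B)$, which is monotone by construction and satisfies $\varepsilon(A) \ge \ell(A)$, so every chosen alternative is admitted.

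The main obstacle is to verify the upper bound $\varepsilon(A) < h(A)$ whenever $c(A) \ne A$, ensuring that no subset's lower threshold ever pulls an unchosen alternative into the choice set; this is precisely where strong convexity and the acyclicity of $Q$ enter. The exponential gaps reduce the claim to $\ell(B) < h(A)$ for every $B \subseteq A$. Writing $2^{M_A}$ and $2^{M_B}$ for the utilities of the $\succ$-maxima of $A$ and $B$, the case where the $\succ$-top of $B$ lies strictly below that of $A$ is immediate, since $\ell(B) < 2^{M_B} \le 2^{M_A - 1} \le h(A)$ (using that the $\succ$-top of $A$ is always chosen). The crux is the case where $A$ and $B$ share their $\succ$-maximum $t$: there $\ell(B) < h(A)$ amounts to the $\succ$-minimum of $c(B)$ lying above the highest unchosen alternative of $A$. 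Assuming this fails produces an element of $c(B) \setminus c(A)$, so along a maximal chain from $B$ to $A$ some single addition $z$ drops a previously chosen element from a menu $C$ with $B \subseteq C$, whence $t \in C$; then $z Q t$ yields $z \succ t$, contradicting $t \succ z$ (as $t$ is the $\succ$-maximum of $A \ni z$). This $Q$-cycle contradiction is the heart of the argument and the step most in need of care.
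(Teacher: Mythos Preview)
The paper does not supply its own proof of this claim; it is quoted as Theorem~1 of \cite{frick}, with the added remark (in the paragraph following the claim) that Frick's construction happens to yield a strongly convex $v$. So there is nothing in the paper to compare your argument against.

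That said, your argument is a correct self-contained proof, with one cosmetic slip: your direction labels are reversed. What you call the ``only if'' direction (starting from a monotone-threshold representation and deducing that $S$ is acyclic) is the ``if'' direction of the biconditional as stated, and conversely.

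Substantively both halves are sound. In the first half, the verification that $xRy \Rightarrow v(x)>v(y)$ and $xQy \Rightarrow v(x)>v(y)$ from any monotone-$\varepsilon$ representation is correct and, as you note, does not use strong convexity. In the second half, your construction $v(x_i)=2^i$ along a linear extension of $S$ is strongly convex by Lemma~\ref{lemma: cc}, and the choice $\varepsilon(A)=\max_{B\subseteq A}\ell(B)$ is the natural monotone closure. The verification $\varepsilon(A)<h(A)$ splits correctly: when $M_B<M_A$ the exponential gap gives $\ell(B)<2^{M_B}\le 2^{M_A-1}\le h(A)$ (using that the $\succ$-top of $A$ is always chosen, which follows from $R\subseteq{\succ}$); when $M_B=M_A$, a failure produces $w\in c(B)\setminus c(A)$, and walking a maximal chain from $B$ to $A$ isolates a single addition $z$ to some $C\supseteq B$ that drops $w$, whence $zQt$ for the common top $t\in C$, contradicting $t\succ z$. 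This last step is indeed the heart of the construction and your treatment of it is correct.
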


An important, but overlooked, feature of Frick's representation is that the utility function is strongly convex suggesting that regularity may always be satisfied within her model [Proposition \ref{pro: conc}] if $u=v$. However, Frick's decision-maker may violate property $\alpha$, and so by theorems \ref{thm: existence} and \ref{thm: overload} she also faces a welfare-decreasing violation of regularity.

The following results complement theorem \ref{thm: overload} providing a representation for property $\alpha$ within GTLMs and analyzing its effect when combined with Frick's model. The first result, which mirrors claim \ref{cl: frick}, shows that property $\alpha$ in GTLMs induces a representation in which the utility function is strongly concave and the threshold is decreasing in set inclusion. 



\begin{proposition} \label{prop: delta}
Within GTLMs, the following are equivalent:
\begin{itemize}
    \item[(i)] property $\alpha$ is satisfied;
    \item[(ii)] there is a pair $(v,\varepsilon)$ that represent $c$ such that $v$ is strongly concave and $\varepsilon(A) \geq \varepsilon(B)$ if $A \subseteq B$.
    \item[(iii)] either regularity is satisfied or its violations are welfare-increasing.
\end{itemize}

\end{proposition}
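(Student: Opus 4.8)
The plan is to obtain (i)$\,\Leftrightarrow\,$(iii) directly from Theorem \ref{thm: overload} and to spend the real effort on (i)$\,\Leftrightarrow\,$(ii). Since ``$p$ reveals choice overload'' means precisely that there is a welfare-decreasing violation of regularity, the contrapositive of Theorem \ref{thm: overload} reads: property $\alpha$ holds if and only if every violation of regularity (if any) is welfare-increasing, i.e. either regularity holds or all its violations are welfare-increasing. This is exactly statement (iii), so no further work is needed there.

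For (ii)$\,\Rightarrow\,$(i) I would argue directly, noting that strong concavity is not even used. Suppose $c$ has a representation $(v,\varepsilon)$ with $\varepsilon(A)\geq\varepsilon(B)$ whenever $A\subseteq B$, and take $x\in A\subseteq B$ with $x\in c(B)$. Because $A\subseteq B$ gives $\max_{y\in A}v(y)\leq\max_{y\in B}v(y)$ and $\varepsilon(A)\geq\varepsilon(B)$, we obtain $\max_{y\in A}v(y)-v(x)\leq\max_{y\in B}v(y)-v(x)\leq\varepsilon(B)\leq\varepsilon(A)$, hence $x\in c(A)$. This is property $\alpha$.

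The substance is (i)$\,\Rightarrow\,$(ii), which is a construction. As $c$ is a GTLM, $R$ is acyclic; I would fix a linear extension, enumerate $X=\{x_1,\dots,x_n\}$ so that $x_iRx_j$ implies $i>j$, and set $v(x_i)=k-a^{-i}$ for some $a\geq 2$ and $k>1$. By Lemma \ref{lemma: cc} this $v$ is strongly concave, and being increasing in the index it is aligned with $R$. Alignment forces each $c(A)$ to be an upper contour set of $v$ inside $A$: if $x\in c(A)$, $y\in A$ and $v(y)>v(x)$, then $y\in c(A)$, for otherwise $xRy$ and $v(x)>v(y)$. Consequently the thresholds reproducing a given $c(A)$ form an interval; writing $\varepsilon^-(A)=\max_{A}v-\min_{c(A)}v$ and $\varepsilon^+(A)=\max_{A}v-\max_{A\setminus c(A)}v$ (with $\varepsilon^+(A)=+\infty$ when $c(A)=A$), any $\varepsilon(A)\in[\varepsilon^-(A),\varepsilon^+(A))$ represents $c(A)$. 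I would then define $\varepsilon(A)=\max_{B\supseteq A}\varepsilon^-(B)$, which is decreasing in set inclusion by construction and satisfies $\varepsilon(A)\geq\varepsilon^-(A)$ (take $B=A$).

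The main obstacle is verifying that this choice never reaches $\varepsilon^+(A)$, i.e. that $\varepsilon^-(B)<\varepsilon^+(A)$ for every $B\supseteq A$ with $c(A)\subsetneq A$; this is exactly where property $\alpha$ and the geometric decay of $v$ combine. Let $x_s=\max(A\setminus c(A))$ and $x_t=\min c(B)$. The contrapositive of property $\alpha$ gives $x_s\notin c(B)$, and since $c(B)$ is an upper contour set this forces $s<t$; moreover $x_s\notin c(A)$ while $\max A\in c(A)$ forces $s<M$, where $x_M=\max A$. Telescoping $v(x_j)-v(x_i)=a^{-i}-a^{-j}$, the desired inequality becomes $a^{-t}-a^{-M'}<a^{-s}-a^{-M}$, where $x_{M'}=\max B$. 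Using $t\geq s+1$ and $M\geq s+1$ one bounds the left side above by $a^{-s}/a$ and the right side below by $a^{-s}(1-a^{-1})$, and these are compatible precisely when $a\geq 2$. This confirms $\varepsilon(A)\in[\varepsilon^-(A),\varepsilon^+(A))$ for all $A$, so $(v,\varepsilon)$ represents $c$ with $v$ strongly concave and $\varepsilon$ decreasing, giving (i)$\,\Rightarrow\,$(ii). I expect this interval-compatibility step to be the only delicate point, the rest being bookkeeping around the upper-contour structure.
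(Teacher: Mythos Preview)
Your proof is correct and shares the paper's overall skeleton: (i)$\Leftrightarrow$(iii) is read off Theorem \ref{thm: overload}, (ii)$\Rightarrow$(i) follows from monotonicity of $\varepsilon$ alone, and for (i)$\Rightarrow$(ii) one builds a strongly concave $v$ aligned with $R$ and obtains a decreasing threshold by maximizing over supersets. The execution differs, though. The paper works with a generic strongly concave $v$, defines the threshold recursively as $\varepsilon(A)=\max\{\max_{B\supset A}\varepsilon(B),\,v(x^{X})-v(x_{A})\}$ (note the \emph{global} maximizer $x^{X}$, not $x^{B}$), and verifies the representation by induction with a case split that invokes the abstract strong-concavity inequality. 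You instead fix the explicit $v(x_i)=k-a^{-i}$, take the tighter threshold $\varepsilon(A)=\max_{B\supseteq A}\varepsilon^{-}(B)$, and reduce the key check to a single arithmetic inequality. Your route is more concrete and arguably cleaner; the paper's buys generality in $v$ at the cost of a heavier induction.

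One small glitch to repair: your stated bounds give only $a^{-t}-a^{-M'}\leq a^{-s}/a$ and $a^{-s}-a^{-M}\geq a^{-s}(1-a^{-1})$, and these two numbers coincide at $a=2$, so the \emph{strict} inequality $\varepsilon^{-}(B)<\varepsilon^{+}(A)$ is not yet established there. Either take $a>2$, or note that the first bound is strict whenever $M'>t$ (so the left side is $<a^{-t}\leq a^{-s}/a\leq$ right side), while if $M'=t$ the left side is $0$ and the right side is positive since $M>s$.
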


\begin{proof}
The equivalence between (i) and (iii) is a corollary of theorem \ref{thm: overload}. For the equivalence between (i) and (ii), see Appendix \ref{app: omitted}.\footnote{Claim \ref{cl: frick} and proposition \ref{prop: delta} convey an interesting message. To construct a decreasing (resp. increasing) threshold w.r.t. set inclusion, the insights from the shape of the utility function go in the opposite direction than those from the threshold. To elucidate, on the one hand, choice overload is represented by more alternatives chosen in bigger sets, an intuition captured by the decreasing threshold. However, the utility function is strongly convex, a property that seems to imply the opposite! On the other hand, a strongly concave utility function should again imply that more alternatives are chosen in bigger menus, but the increasing threshold provides the opposite intuition as we will show discussing \cite{masa2017}.
}
\end{proof}

Our second result had already been uncovered by \cite{frick}. It shows the role of property $\alpha$ within her model rationalizing, in retrospect, the co-existence of representations with decreasing and increasing thresholds. 

\begin{claim}[Lemma 1, \cite{frick}] \label{cl: frick2}
Let $c$ have a menu-dependent threshold representation such that $\varepsilon(A) \leq \varepsilon(B)$ if $A \subseteq B$. Then property $\alpha$ is satisfied if and only if $\varepsilon(A) = \varepsilon(B)$ for all $A, B \subseteq X$ or, equivalently, a semiorder representation.   
\end{claim}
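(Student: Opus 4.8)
The plan is to prove the two implications separately, disposing of the easy ``semiorder $\Rightarrow$ property $\alpha$'' direction first and reserving the work for the converse. For the easy direction, suppose $c$ has a constant-threshold (semiorder) representation $c(A)=\{x\in A:\ \max_{y\in A}v(y)-v(x)\le \bar\varepsilon\}$ with $\bar\varepsilon$ fixed. If $x\in A\subseteq B$ and $x\in c(B)$, then $\max_{y\in A}v(y)\le \max_{y\in B}v(y)$ gives $\max_{y\in A}v(y)-v(x)\le \max_{y\in B}v(y)-v(x)\le\bar\varepsilon$, so $x\in c(A)$; this is property $\alpha$. Note that this is exactly the argument already used for the constant case and does not rely on monotonicity of the threshold.

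For the converse I would first extract a pairwise characterization of $c$. Writing $y^{*}_{A}\in\argmax_{y\in A}v(y)$, I claim that under property $\alpha$ and the monotone (increasing) threshold one has $x\in c(A)$ if and only if $x\in c(\{x,y^{*}_{A}\})$. The ``only if'' is immediate from property $\alpha$ applied to $\{x,y^{*}_{A}\}\subseteq A$; the ``if'' follows since $x\in c(\{x,y^{*}_{A}\})$ means $v(y^{*}_{A})-v(x)\le\varepsilon(\{x,y^{*}_{A}\})\le\varepsilon(A)$, where the last inequality is monotonicity. Thus membership in $c(A)$ is decided entirely by the comparison with the best element of $A$, so $c$ is the maximizer of the pairwise relation defined by $xPy$ iff $v(x)>v(y)$ and $y\notin c(\{x,y\})$.

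From this characterization I would derive two monotonicity properties of ``survival'' (i.e. of $x\in c(\{x,z\})$ when $v(z)\ge v(x)$), each obtained by passing through a three-element menu and then applying property $\alpha$: (a) if $v(z_{1})\ge v(z_{2})\ge v(x)$ and $x\notin c(\{x,z_{2}\})$ then $x\notin c(\{x,z_{1}\})$ (failing against a lower benchmark forces failing against a higher one); and (b) if $v(z)\ge v(x_{2})\ge v(x_{1})$ and $x_{1}\in c(\{x_{1},z\})$ then $x_{2}\in c(\{x_{2},z\})$ (surviving with a larger gap forces surviving with a smaller one). The main obstacle is the final step: showing that $P$ is a semiorder, i.e. that it is irreflexive, Ferrers, and semitransitive. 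Ferrers ($aPb$ and $cPd$ imply $aPd$ or $cPb$) falls out of (a) by comparing $v(a)$ with $v(c)$ and reading off the dominated alternative against the larger benchmark. Semitransitivity ($aPb$ and $bPc$ imply, for every $d$, $aPd$ or $dPc$) requires a short case split on where $v(d)$ sits relative to the chain $v(a)>v(b)>v(c)$, using (a) to push benchmarks up and (b) to push alternatives down. Since the maximizer of a semiorder is exactly a constant-threshold threshold model, $c$ then admits a representation with $\varepsilon(\cdot)$ constant, which is the desired conclusion and recovers Frick's Lemma 1. The genuinely delicate point throughout is that the coherence needed for a semiorder comes not from the (pairwise-incomparable) thresholds themselves but from the single underlying utility $v$ shared across all menus, which is precisely what monotonicity of the threshold lets us exploit.
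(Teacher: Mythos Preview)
The paper does not actually prove this claim: it is stated as Lemma~1 of \cite{frick} and left unproved. So there is nothing in the paper to compare your argument against, and the relevant question is simply whether your argument is sound. It is.

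Your ``if'' direction is immediate. For the converse, the key move---reducing membership in $c(A)$ to the binary comparison with the $v$-maximizer $y^{*}_{A}$---is exactly right: one inclusion is property~$\alpha$, the other is monotonicity of $\varepsilon$. From there, the monotonicity facts (a) and (b) are both obtained, as you say, by routing through the triple and applying property~$\alpha$; concretely, for (a) one uses $x\notin c(\{x,z_{2}\})\Rightarrow x\notin c(\{x,z_{1},z_{2}\})\Rightarrow x\notin c(\{x,z_{1}\})$, and for (b) one first observes that $x_{1}\in c(\{x_{1},x_{2},z\})$ (via the binary reduction) forces $x_{2}\in c(\{x_{1},x_{2},z\})$ by the threshold form, and then contracts. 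Your case splits for Ferrers (compare $v(a)$ with $v(c)$) and for semitransitivity (locate $v(d)$ relative to $v(a),v(b),v(c)$) go through cleanly using (a) to raise benchmarks and the contrapositive of (b) to lower challengers. The final appeal to Scott--Suppes to pass from ``$c$ is the max of a semiorder'' to a constant-threshold representation is standard on a finite $X$.

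Two small points of presentation. First, note that the literal clause ``$\varepsilon(A)=\varepsilon(B)$ for all $A,B$'' cannot be read as a statement about the \emph{given} threshold map (a two-point example with a non-constant monotone $\varepsilon$ already satisfies property~$\alpha$); you are right to read the claim as the existence of a semiorder (constant-threshold) representation, and your proof delivers exactly that. Second, your assertion that $c$ is the max of $P$ actually \emph{uses} fact~(a), so in a clean write-up derive (a) first and then conclude $c=\max(P,\cdot)$.
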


We conclude our analysis with a corollary of theorem \ref{thm: overload}: Frick's representation is compatible with a choice overload interpretation if and only if property $\alpha$ is violated. 

\begin{corollary}
Within GTLMs, let $c$ have a Frick's representation. Property $\alpha$ is violated if and only if $p$ reveals choice overload.
\end{corollary}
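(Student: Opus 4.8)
The plan is to derive this corollary directly from Theorem \ref{thm: overload} by observing that a Frick's representation is a special case of a GTLM, so that the characterization of choice overload via violations of property $\alpha$ applies verbatim. The essential point is that nothing in the definition of choice overload depends on the cardinal shape of the utility function $v$ (strongly convex in Frick's case) or on the monotonicity direction of the threshold $\varepsilon$; Theorem \ref{thm: overload} already established that within \emph{any} GTLM, property $\alpha$ fails if and only if $p$ reveals choice overload.

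First I would verify that a decision-maker with a Frick's representation is indeed a GTLM decision-maker. By Claim \ref{cl: frick}, a Frick's representation consists of a pair $(v, \varepsilon)$ with $v$ strongly convex and $\varepsilon$ increasing in set inclusion, generating first-stage choices via the threshold rule $c(A) = \{x \in A : \max_{y \in A} v(y) - v(x) \leq \varepsilon(A)\}$. This is exactly the first-stage selection rule of a GTLM. To complete the embedding, I would note that we may take the second-stage logit utility $u$ to be aligned with $R$ (for instance $u = v$, since $v$ respects $R$ by construction of the threshold representation), so that $(v, u, \varepsilon)$ constitutes a bona fide GTLM.

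Given this embedding, the conclusion is immediate: applying Theorem \ref{thm: overload} to this GTLM yields that property $\alpha$ is violated if and only if $p$ reveals choice overload. I would emphasize in the write-up that the content of the corollary is conceptual rather than technical, namely that Frick's strongly convex representation — which by Proposition \ref{pro: conc} would always satisfy regularity when $u = v$ and path-independence holds — can nonetheless exhibit welfare-decreasing regularity violations precisely because Frick's model permits failures of property $\alpha$ (as Claim \ref{cl: frick2} makes explicit, a non-constant threshold breaks property $\alpha$).

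The main obstacle, if any, is purely a matter of confirming that the embedding respects all the definitional requirements of a GTLM, in particular the alignment condition on the second-stage utility; once $u = v$ is chosen this is automatic, so there is no substantive difficulty. The corollary is therefore best presented as a one-line consequence of Theorem \ref{thm: overload} together with the observation that Frick's representation is a GTLM.
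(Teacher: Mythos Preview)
Your approach is correct and essentially identical to the paper's: the corollary is stated in the paper without explicit proof, as an immediate consequence of Theorem~\ref{thm: overload}, which is exactly what you propose. Your additional verification that a Frick representation is a GTLM (with $u=v$ aligned to $R$) is a useful sanity check but adds no new ingredient beyond what the paper implicitly assumes.
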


\subsubsection{Choice overload by \cite{masa2017}}

In another recent paper, \cite{masa2017} modelled a decision-maker whose attention span is reduced in large menus causing her to miss potentially good alternatives. We call this model, the Limited Attention Model [LAM].

LAMs are characterized by a pair $(\succeq, \Gamma)$ where $\succeq$ is a weak order and $\Gamma : 2^X \to 2^X$ with $\Gamma(A) \not= \emptyset$ for all $A \subseteq X$ is a consideration function, also called competition filter. Competition filters have the following property: for all $x \in A \subset B$, if $x \in \Gamma(B)$ then $x \in \Gamma(A)$. The decision-maker selects the alternatives as follows:
$$c(A) = \{ x \in \Gamma(A) : x \succeq y  \hspace{0.2em} \forall \hspace{0.2em} y \in \Gamma(A) \}$$

LAMs, similar to Frick's model, induce violations of regularity via property $\alpha$. However, differently from Frick's model, these violations are driven by the failure to consider the best alternatives given that the worst ones, i.e. those dominated within the weak order $\succeq$, are never chosen. For example, $x \succ y \succ z$, $\Gamma(\{x,y,z\}) = y$, and $\Gamma(\{x,y\}) = \{x,y\}$ imply $c(x,y,z) = y$ and $c(x,y) = x$. 

We will show that these violations are ruled out within GTLMs as the intersection of LAMs and GTLMs is characterized by property $\alpha$. This also implies that (i) the intuition underlying LAMs within GTLMs is represented by a decreasing threshold [Proposition \ref{prop: delta}], i.e. the decision-maker restricts her focus when facing bigger menus but would not miss the best alternatives, formalizing the idea that choice overload is interpreted differently in \cite{frick} and \cite{masa2017}, (ii) and violations of regularity due to limited attention within GTLMs never reveal choice overload.

\begin{proposition} \label{prop: la}
Within GTLMs, property $\alpha$ is satisfied if and only if there is a pair $(\succeq, \Gamma)$ that represents the choices. Further, the representation is such that $x \sim y$ for all $x,y \in X$.
\end{proposition}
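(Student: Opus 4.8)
The plan is to prove the two directions of the equivalence separately, folding the ``further'' clause into the sufficiency construction. For the easy direction---property $\alpha$ implies the existence of a LAM representation---I would take $\succeq$ to be the total-indifference weak order ($x \sim y$ for all $x,y \in X$) and set $\Gamma = c$. Then $\Gamma(A) = c(A) \neq \emptyset$; the competition-filter condition for $\Gamma$ is \emph{verbatim} property $\alpha$; and since every alternative is $\succeq$-maximal, $\{x \in \Gamma(A) : x \succeq y \ \forall y \in \Gamma(A)\} = \Gamma(A) = c(A)$. This single construction produces a valid LAM pair and simultaneously yields the ``further'' claim, since the $\succeq$ we used is total indifference.

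The substantive direction is the converse: assuming $c$ lies in both the GTLM and the LAM classes, deduce property $\alpha$. I would argue by contradiction, exploiting the defining structural feature of GTLMs, namely that $c(A)$ is \emph{upward closed} in $v$: if $x \in c(A)$, $y \in A$, and $v(y) \geq v(x)$, then $\max_{w \in A} v(w) - v(y) \leq \max_{w \in A} v(w) - v(x) \leq \varepsilon(A)$, so $y \in c(A)$; in particular a $v$-maximal element of any menu is always chosen. Suppose property $\alpha$ fails at some $A \subsetneq B$, so there is $x \in A$ with $x \in c(B) \setminus c(A)$. From $x \notin c(A)$ we get $\max_{w \in A} v(w) - v(x) > \varepsilon(A)$, so letting $a$ be a $v$-maximal element of $A$ we have $a \in c(A)$ and $v(a) > v(x)$ (in particular $a \neq x$).

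The crux is that this auxiliary alternative $a$ receives contradictory treatment under the LAM representation. On the one hand, $v(a) > v(x)$ together with $x \in c(B)$ and upward closure in $B$ forces $a \in c(B)$; hence $a$ and $x$ are both $\succeq$-maximal in $\Gamma(B)$, giving $a \sim x$. On the other hand, $x \in c(B) \subseteq \Gamma(B)$ and $x \in A \subsetneq B$ let the competition filter conclude $x \in \Gamma(A)$. Since $a \in c(A)$ means $a \succeq y$ for all $y \in \Gamma(A)$ while $x \in \Gamma(A) \setminus c(A)$ is \emph{not} $\succeq$-maximal, transitivity forces $a \succ x$: were $a \sim x$, then $x$ would inherit maximality from $a$ and land in $c(A)$, a contradiction. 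The conclusions $a \sim x$ and $a \succ x$ are incompatible, closing the argument.

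I expect the main obstacle to be the bookkeeping in this final step: correctly locating $a$ as the $v$-maximal element of $A$ and checking that it sits in $c(A) \cap c(B)$ yet is strictly preferred to $x$ only within $A$. Once the GTLM upward-closure property and the competition-filter consequence $x \in \Gamma(A)$ are in place, the incompatibility is a one-line appeal to transitivity and completeness of the weak order. The degenerate case $A = B$ is vacuous for property $\alpha$, so the competition filter (which bites on strict inclusions) always applies exactly where it is needed.
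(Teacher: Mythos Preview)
Your argument is correct, and the easy direction (property~$\alpha$ $\Rightarrow$ LAM) is identical to the paper's. For the substantive direction, however, you and the paper locate the contradiction in different places. You pick $a$ as the $v$-maximizer in $A$, use threshold upward-closure to push $a$ into $c(B)$, and then derive the clash $a \sim x$ (from joint membership in $c(B)$) versus $a \succ x$ (from $x \in \Gamma(A)\setminus c(A)$) inside the weak order~$\succeq$. The paper instead picks $y$ as the $\succ$-maximizer in $\Gamma(A)$, argues $y \notin \Gamma(B)$ (since $y \succ x$ yet $x \in c(B)$), and obtains $yRx$ together with $xRy$, contradicting the acyclicity of $R$ that defines GTLMs.

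Both routes are short; the difference is which GTLM ingredient does the work. You invoke the concrete threshold structure $(v,\varepsilon)$ via upward closure, while the paper stays at the abstract level of the revealed-preference relation $R$ and its acyclicity. Your approach is arguably more self-contained once the threshold representation is on the table; the paper's is slightly more portable because it uses only the order-theoretic characterization of GTLMs and never unpacks $v$ or $\varepsilon$.
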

\begin{proof}
Suppose $c$ can be represented by $(\succeq, \Gamma)$. We show that if $R$ is acyclic then property $\alpha$ is satisfied. Suppose, by contradiction, that $x \in c(B)$, $x \in A \subseteq B$ but $x \not\in c(A)$. Since $x \in c(B)$, we have that $x \in \Gamma(B)$, and by the property of the consideration function also $x \in \Gamma(A)$. Since $x \not\in c(A)$ and $x \in \Gamma(A)$ there is an alternative $y \in \Gamma(A)$ such that $y \succ x$. Since we can take $y = \max(\succ, \Gamma(A))$, this also implies that $y R x$. However, $x \in c(B)$ and $y \succ x$ imply that $y \not\in \Gamma(B)$, and so $y \not\in c(B)$. This implies $x R y$ violating the acyclicity of $R$. 

Conversely, property $\alpha$ (even outside GTLMs) implies the existence of a LAM that represents $c$. This is obtained by setting $c(A) = \Gamma(A)$ for all $A \subseteq X$ to obtain the consideration function and $x \sim y$ for all $x,y \in X$ to obtain a weak order. 
\end{proof}

Finally, we apply our welfare measure to LAMs exactly as we applied it to GTLMs by focusing on utility functions that agree with the asymmetric part of the weak order $\succeq$, i.e. $x \succ y$ implies $u(x) > u(y)$. In doing so, we obtain the following result.\footnote{A stricter approach could require the utility function to fully represent the weak order, i.e. $u(x) > u(y)$ if and only if $x \succ y$ and $u(x) = u(y)$ if and only if $x \sim y$. This approach would yield the following result: 
\begin{proposition} \label{prop: la2}
Within GLMs, let $c$ have a LAM representation. Property $\alpha$ is violated if and only if $p$ reveals choice overload. 
\end{proposition}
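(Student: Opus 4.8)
The plan is to exploit the rigid structure that a LAM imposes on the second stage. The key observation is that, when $u$ fully represents $\succeq$, every chosen alternative lies in a single indifference class: by definition $c(A)$ is the set of $\succeq$-maximal elements of $\Gamma(A)$, so any $x,y \in c(A)$ satisfy $x \sim y$ and hence $u(x)=u(y)$. Consequently the logit probabilities are uniform, $p(x,A)=1/|c(A)|$ for $x \in c(A)$, and welfare collapses to $W(A)=u_A$, where $u_A$ denotes the common utility of the top considered class in $A$. Welfare dominance then reduces to a comparison of these top classes: $B$ welfare dominates $A$ if and only if $u_B \geq u_A$ for every representing $u$, i.e. if and only if the $\succeq$-maximal class of $\Gamma(B)$ is weakly above that of $\Gamma(A)$. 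This reduction replaces the FOSD machinery of lemma \ref{lemma_fosd} (one could equivalently route through that lemma applied to linear extensions of $\succeq$), and it remains valid in GLMs even when $R$ is cyclic, which is precisely why the stricter $\succeq$-representing welfare notion of the footnote is needed.

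With this in hand, I would first prove that property $\alpha$ rules out choice overload, mirroring the first half of theorem \ref{thm: overload}. Suppose regularity fails at $A \subseteq B$ with witness $a$, so $p(a,B) > p(a,A) \geq 0$; then $p(a,B)>0$ forces $a \in c(B)$, and property $\alpha$ yields $a \in c(A)$. Hence $a \in c(A) \cap c(B)$ sits in the top considered class of both menus, so $u_A = u(a) = u_B$ and $W(A)=W(B)$ for every representing $u$. In particular $B$ welfare dominates $A$, so the violation is welfare-increasing, and no welfare-decreasing violation can exist. This disposes of the ``only if'' direction.

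For the converse, suppose property $\alpha$ is violated: there is $b^* \in c(B)$ with $b^* \in A \subseteq B$ and $b^* \notin c(A)$. Since $p(b^*,B)=1/|c(B)| > 0 = p(b^*,A)$, this is already a regularity violation, and it remains to show it is welfare-decreasing. Here I would invoke the competition-filter property of $\Gamma$: from $b^* \in c(B) \subseteq \Gamma(B)$ and $b^* \in A \subseteq B$ we obtain $b^* \in \Gamma(A)$, while $b^* \notin c(A)$ forces some $y \in \Gamma(A)$ with $y \succ b^*$. The $\succeq$-maximal class of $\Gamma(A)$ is therefore strictly above $b^*$, giving $u_A > u(b^*) = u_B$ for every representing $u$, whence $W(A) > W(B)$. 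Thus $B$ does not welfare dominate $A$, the violation is welfare-decreasing, and $p$ reveals choice overload.

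The main obstacle is conceptual rather than computational: one must verify that the welfare reduction $W(A)=u_A$ is legitimate in the stricter approach, which hinges on $u$ respecting the indifference structure of $\succeq$ (exactly what makes the second-stage probabilities uniform), and that the argument stays within GLMs where $R$ may be cyclic and the $R$-aligned welfare notion is unavailable. The only other delicate point is the use of the competition-filter property to locate the dominating alternative $y$ in the violated-$\alpha$ case; everything else follows from the uniformity of the LAM second stage.
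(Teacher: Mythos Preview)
Your proposal is correct and takes essentially the same approach as the paper. The paper's proof delegates the ``property $\alpha$ violated $\Rightarrow$ choice overload'' direction to Proposition~\ref{prop: la3} (whose argument is exactly your competition-filter step producing $y \succ b^*$), and for the other direction observes---just as you do---that $x,y \in c(A)$ forces $x \sim y$ and hence $u(x)=u(y)$ under the stricter representation, so any regularity violation with $a \in c(A) \cap c(B)$ (guaranteed by property~$\alpha$) yields $W(A)=W(B)$.
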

\begin{proof}
The first part of the statement follows from proposition \ref{prop: la3}. 

Conversely, if property $\alpha$ is satisfied then for all $x,y \in X$ if there is a menu $A$ such that $x,y \in c(A)$ then $x \sim y$ which, in turn, implies $u(x) = u(y)$. Therefore, for all $A \subseteq B$ such that there is an alternative $x \in c(A) \cap c(B)$, $W(A) = W(B)$.
\end{proof}
}




\begin{proposition} \label{prop: la3}
Within GLMs, let $c$ have a LAM representation. If property $\alpha$ is violated then $p$ reveals choice overload. If property $\alpha$ is satisfied then $p$ reveals choice overload if only if either the utility function is misaligned with $R$ or $R$ is cyclic, i.e. $p$ is not a GTLMs.
\end{proposition}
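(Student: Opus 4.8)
The plan is to push every welfare comparison down to the logit \(u\) that generates \(p\). Writing \(\mathrm{CM}(S)=\sum_{s\in S}u(s)^{2}/\sum_{s\in S}u(s)\) for the contraharmonic mean, one has \(W(A)=\mathrm{CM}(c(A))\), so a regularity violation at \(A\subseteq B\) is welfare-decreasing exactly when \(\mathrm{CM}(c(A))>\mathrm{CM}(c(B))\). The computational fact I would isolate first is monotonicity: for \(t\notin S\), \(\mathrm{CM}(S\cup\{t\})>\mathrm{CM}(S)\iff u(t)>\mathrm{CM}(S)\); and crudely, if every element of \(S'\) has strictly larger utility than every element of \(S\) then \(\mathrm{CM}(S')>\mathrm{CM}(S)\). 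This replaces the first-order stochastic dominance bookkeeping of Theorem~\ref{thm: overload} by a single scalar comparison.

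For the first statement (\(\alpha\) violated \(\Rightarrow\) overload) I would argue as in the converse of Theorem~\ref{thm: overload}, read through the competition filter. An \(\alpha\)-violation gives \(A\subseteq B\) and \(b^{*}\in c(B)\setminus c(A)\) with \(b^{*}\in A\); the filter forces \(b^{*}\in\Gamma(A)\), so \(b^{*}\notin c(A)\) yields \(a\succ b^{*}\) for every \(a\in c(A)\), while every element of \(c(B)\) is \(\succeq\)-tied with \(b^{*}\). Hence every utility on \(c(A)\) strictly exceeds every utility on \(c(B)\), so \(p(b^{*},A)=0<p(b^{*},B)\) is a regularity violation with \(\mathrm{CM}(c(A))>\mathrm{CM}(c(B))\), i.e. welfare-decreasing. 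For the ``only if'' half of the second statement (\(\alpha\) satisfied and \(p\) a GTLM \(\Rightarrow\) no overload) there is nothing new: \(p\) being a GTLM means \(R\) is acyclic and the logit is aligned with \(R\), so Theorem~\ref{thm: overload} applies verbatim and every regularity violation is welfare-increasing.

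The substance is the remaining implication: \(\alpha\) satisfied and \(p\) \emph{not} a GTLM \(\Rightarrow\) overload. I would first record a lemma sharpening Proposition~\ref{prop: la}: \emph{in a LAM, \(\alpha\) forces \(R\) acyclic}. Along any \(R\)-cycle some step passes through an \emph{unconsidered} alternative (an all-considered cycle would be a \(\succ\)-cycle); taking the \(\succeq\)-maximal cycle element \(x^{*}\), it is chosen in one cycle-menu but absent from \(\Gamma\) of another, and by the filter \(x^{*}\) stays unconsidered in the union \(M\) of the cycle-menus, so a strictly \(\succeq\)-lower element survives in \(c(M)\) while losing to \(x^{*}\) in a pair --- an \(\alpha\)-violation. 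Thus under \(\alpha\) the only way to fail GTLM is misalignment of the logit with \(R\). Next I would use that the ``co-chosen'' relation (\(x,y\) both in some \(c(A)\)) links only elements of a common \(\succeq\)-indifference level, so reproducing \(p\) pins the logit ratios inside each level while leaving inter-level scales free; since \(R\) is acyclic the inter-level constraints can always be met, so failing GTLM produces a \emph{within-level} pair \(a,b\) with \(aRb\) but \(u(a)<u(b)\). From it I would build the overload: \(aRb\) supplies \(C\) with \(a\in c(C)\), \(b\notin\Gamma(C)\), and connectivity supplies a menu in which \(a,b\) are co-chosen; setting \(B:=C\) and \(A:=c(C)\cup\{b\}\subseteq C\), the alternative \(b\) is considered in the smaller \(A\), so \(c(A)=c(B)\cup\{b\}\) is an Outcast-type step, and since \(u(b)>u(a)\) with \(a\in c(B)\), arranging \(u(b)>\mathrm{CM}(c(B))\) gives \(\mathrm{CM}(c(A))>\mathrm{CM}(c(B))\): a welfare-decreasing regularity violation.

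The main obstacle is exactly this last construction. Turning an abstract within-level misaligned pair \((a,b)\) into concrete menus \(A\subseteq B\) that (i) respect the competition filter, (ii) realise \(c(A)=c(B)\cup\{b\}\), and (iii) satisfy the \emph{strict} inequality \(u(b)>\mathrm{CM}(c(B))\) is delicate: the filter can keep \(b\) out of \(\Gamma\) of the menus one wants, and the boundary case \(u(a)=u(b)\) is welfare-\emph{neutral} rather than decreasing, so the argument genuinely needs strict misalignment (a genericity/strictness reading of ``misaligned with \(R\)''). I expect the clean fix is to take \(A\) minimal with \(\{a,b\}\subseteq c(A)\) and to pass to the sub-level on which \(b\) attains maximal utility, reducing (iii) to the one-element monotonicity fact above; showing such an \(A\) always exists inside a LAM is the step I would spend the most care on.
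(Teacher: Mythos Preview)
Your treatment of the first claim (an $\alpha$-violation yields overload) and of the ``only if'' direction under $\alpha$ is fine and matches the paper. The contraharmonic-mean reformulation is a nice shortcut.

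The real problem is your lemma ``in a LAM, $\alpha$ forces $R$ acyclic.'' This is \emph{false}, and it is the hinge of your entire ``if'' argument. A counterexample: take $X=\{a,b,c,d\}$ with total indifference ($x\sim y$ for all $x,y$), so $c=\Gamma$, and define the filter by
\[
a\in\Gamma(S)\iff a\in S\subseteq\{a,b,c\},\quad b\in\Gamma(S)\iff b\in S\subseteq\{a,b,d\},
\]
\[
d\in\Gamma(S)\iff d\in S\subseteq\{a,b,d\},\quad c\in\Gamma(S)\text{ for every }S\ni c.
\]
This is a valid competition filter (each element's ``considered'' family is downward closed, and every $\Gamma(S)$ is nonempty since $c$ is always present or the set lies inside $\{a,b,d\}$). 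One checks directly that $c$ satisfies property $\alpha$. Yet $c(\{a,b,c\})=\{a,c\}$ gives $aRb$, and $c(\{a,b,d\})=\{b,d\}$ gives $bRa$: $R$ is cyclic. The flaw in your sketch is the line ``so a strictly $\succeq$-lower element survives in $c(M)$'': from $x^*\notin\Gamma(M)$ you cannot conclude that $c(M)$ sits strictly below $x^*$---in the example $c(M)=\{c\}$ lives on the \emph{same} indifference level as $a$ and $b$, so there is no pairwise loss to $x^*$ and no $\alpha$-violation to extract.

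Because the lemma fails, your reduction ``under $\alpha$, not-GTLM $=$ misalignment'' collapses, and your within-level construction never reaches the cyclic-$R$ case at all. The paper does not attempt this reduction. Its route is shorter and avoids the trap: it observes that co-chosen alternatives are $\succeq$-indifferent (so the logit is unconstrained between them), invokes Theorem~\ref{thm: 1} to locate a $\beta$-violation $x\in c(A)\cap c(B)$, $y\in c(A)\setminus c(B)$ with $A\subseteq B$, and then takes $u(y)>u(x)$ to produce a welfare-decreasing regularity violation. The disjunction ``$u$ misaligned with $R$ or $R$ cyclic'' then falls out automatically, since $xRy$ holds; there is no need to first prove $R$ acyclic.
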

\begin{proof}

If property $\alpha$ is violated then we have $A \subseteq B$, $x \in c(B) \setminus c(A)$ and another alternative $y \in B$ such that $y \in c(A) \setminus c(B)$ because if $y \in c(B)$ then this would imply $x \sim y$ and $x \in c(A)$, a contradiction. This observation follows the Weak Revealed Indifference axiom in \cite{masa2017}. Instead, the violation of property $\alpha$ reveals that $y \succ x$, hence $u(y) > u(x)$. This again implies that $c(B)$ does not first-order stochastically dominates $c(A)$ and so that $p$ reveals choice overload.

For the second part, let property $\alpha$ be satisfied. Then, if the utility function is aligned with $R$ by theorem \ref{thm: overload} we know that $p$ does not reveal choice overload. 

Conversely, note that $x \sim y$ if there is a menu $A$ such that $x, y \in c(A)$ since $x,y \in \Gamma(A)$. Hence, the utility function is unconstrained among alternatives that are chosen together. Then, by theorem \ref{thm: 1}, a violation of regularity occurs only if property $\beta$ is violated. Hence, there are two alternatives $x,y \in A \subseteq B$, with $x \in c(A) \cap c(B)$ and $y \in c(A) \setminus c(B)$. To achieve a welfare-decreasing violation of regularity, we set $u(y) > u(x)$ as in the proof of theorem \ref{thm: 1}. Notice that, either $R$ is acyclic, hence the utility is misaligned or $R$ is cyclic.
\end{proof}
Combining propositions \ref{prop: la} and \ref{prop: la3}, we obtain the following corollary.
\begin{corollary}
Within GLMs, let $c$ have a LAM representation. Then, $p$ reveals choice overload if and only if $p$ is not a GTLMs.
\end{corollary}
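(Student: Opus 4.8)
The plan is to obtain this corollary as a direct bookkeeping consequence of Propositions \ref{prop: la} and \ref{prop: la3}, splitting the argument according to whether $p$ belongs to the GTLM class. Throughout, I fix a LAM representation $(\succeq, \Gamma)$ of $c$, which exists by hypothesis, and I record the defining observation that $p$ fails to be a GTLM precisely when either $R$ is cyclic or the second-stage utility function is misaligned with $R$ (this is exactly the negation of the GTLM requirement that $R$ be acyclic and $u$ be aligned with $R$, and it is the condition appearing in the ``i.e.'' of Proposition \ref{prop: la3}).

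For the direction ``not a GTLM implies choice overload'', I would argue by cases on property $\alpha$. If property $\alpha$ is violated, the first clause of Proposition \ref{prop: la3} immediately yields that $p$ reveals choice overload, irrespective of the GTLM status. If property $\alpha$ is satisfied, the second clause of Proposition \ref{prop: la3} states that $p$ reveals choice overload if and only if $p$ is not a GTLM; since we are assuming $p$ is not a GTLM, choice overload again follows. Thus both sub-cases deliver choice overload whenever $p$ is not a GTLM.

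For the converse, ``GTLM implies no choice overload'', the first step is to pin down property $\alpha$ via Proposition \ref{prop: la}. Because $p$ is assumed to be a GTLM while $c$ simultaneously admits the LAM representation $(\succeq, \Gamma)$, the relevant direction of Proposition \ref{prop: la} (a LAM representation inside a GTLM forces property $\alpha$) guarantees that property $\alpha$ holds. With property $\alpha$ secured, I invoke the second clause of Proposition \ref{prop: la3} once more: $p$ reveals choice overload if and only if $p$ is not a GTLM. Since $p$ is a GTLM by assumption, this biconditional gives that $p$ does not reveal choice overload, which is the contrapositive of the remaining implication.

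Since both propositions are already available, there is no deep obstacle here; the argument is purely logical glue. The one point that genuinely requires care is the interface between the two results: the useful clause of Proposition \ref{prop: la3} is conditioned on property $\alpha$ holding, so one must establish property $\alpha$ \emph{before} applying it. This is precisely the role played by Proposition \ref{prop: la} in the converse direction, where it converts the hypothesis ``GTLM together with a LAM representation'' into property $\alpha$. Confirming that the GTLM-failure conditions ($R$ cyclic or misalignment) invoked by Proposition \ref{prop: la3} are literally the negation of the GTLM definition then closes the equivalence.
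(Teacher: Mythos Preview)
Your proof is correct and matches the paper's intended approach: the paper states the corollary as an immediate consequence of combining Propositions \ref{prop: la} and \ref{prop: la3}, and your case split on property $\alpha$ together with the use of Proposition \ref{prop: la} to secure property $\alpha$ in the GTLM direction is exactly the bookkeeping required. There is nothing to add.
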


We conclude by commenting on our proof of theorem \ref{thm: 1} using limited attention. We have shown that if property $\alpha$ is satisfied but property $\beta$ is not, regularity is violated as follows: $A \subseteq B$, $x \in c(A) \cap c(B)$, $y \in c(A) \setminus c(B)$, and $u(y) > u(x)$. In other words, the decision-maker reveals $x R y$ while $y$ is the better alternative. Proposition \ref{prop: la3} shows that this violation can be explained by limited attention as $x \in \Gamma(A) \cap \Gamma(B)$, $y \not\in \Gamma(B)$, $x \sim y$, and $u(y) > u(x)$.

\section{Concluding remarks on rationality, regularity, and cardinal properties of the utility function}

In standard economic theory, rationality is represented by weak orders which are also the basis for several influential GLMs (\cite{ahumada2018luce}, \cite{cerreia2021canon}, \cite{dougan2021odds}, \cite{alos2025characterization}). In theorem \ref{thm: 1}, we have shown that if a decision-maker maximizes a weak order, then regularity is always satisfied irrespective of the shape of the utility function and the alignment between the choices in the first stage and the logit rules. At first glance, theorem \ref{thm: 1} seems to mirror theorem \ref{theorem: R} where property $\theta$ guarantees regularity for all aligned logit rules. However, insights from the literature \citep{gilboa1995aggregation} suggest that this is far from being true. In particular, in the latter case, the shape of the utility function, i.e. concavity/convexity, has descriptive power by being a crucial part of different representations. Instead, in the former case, the shape of the utility function does not have descriptive power because cardinal properties of the utility function become meaningless for rational decision-makers.\footnote{To see this, consider the following example where we assume $\varepsilon>0$: $x_4 \succ x_3 \sim x_2 \succ x_1$ implies $v(x_2)-v(x_1) > \varepsilon$, $v(x_3)-v(x_2) \leq \varepsilon$, and $v(x_4) - v(x_3) > \varepsilon$ violating both concavity and convexity.}

This argument is not new. As anticipated, \cite{gilboa1995aggregation} note that semiorder representations allow to "discuss properties such as concavity or convexity of the utility function, properties that the mathematical idealization of weak orders renders meaningless... the ranking of differences implied by \emph{larger than the just-noticeable difference} versus \emph{not larger than the just-noticeable difference} is naturally given in the original preferences and it suffices for fixing the utility function almost uniquely without the additional assumption that the decision makers can answer questions like \emph{do you prefer $x$ to $y$ more than you prefer $w$ to $z$? in a meaningful and coherent way}". We contextualize this quote within our work with a counter-intuitive observation: regularity has a behavioral interpretation only if the decision-maker is boundedly rational.

\appendix

\section{Further results}

\subsection{Path-independence outside GTLMs}

The following example shows that path-independence is not a sufficient condition for the existence of logit tie-break rules such that $p_{u | c}$ is regular.

\begin{innercustomthm}
Set $X = \{ x,y,z,u,v \}$. Let $\succ_1$, $\succ_2$, and $\succ_3$ be three linear orders such that: $u \succ_1 x \succ_1 v \succ_1 y \succ_1 z$, and $z \succ_2 y \succ_2 x \succ_2 u \succ_2 v$ and $u \succ_3 x \succ_3 y \succ_3 z \succ_3 v$. Consider the choice correspondence defined by:
$$c(A) = \max(\succ_1, A) \cup \max(\succ_2, A) \cup \max(\succ_3, A)$$
$c$ clearly satisfies path-independence \citep{aizerman}. Assume $p_{u | c}$ is regular for some logit utility function $u:X \to (0,\infty)$. Then 
$$\frac{u(z)}{u(y) + u(z) + u(v)} = p(z, yzv) \geq p(z, xyzv) = \frac{u(z)}{u(x) + u(z)}$$
$$\frac{u(u)}{u(x) + u(u)} = p(u, xuv) \geq p(u, xyuv) = \frac{u(u)}{u(y) + u(u)}$$
imply $u(y) \geq u(x) \geq u(y) + u(v)$. Hence $u(v)=0$, which is a contradiction.
\end{innercustomthm}

\begin{remark}
Necessary and sufficient conditions for the existence of a logit tie-break rule such that $p_{u|c}$ is regular can be found by combining our Theorem \ref{thm: existence} - path-independence - with results in \cite{kraft1959intuitive} and \cite{fishburn1986axioms} on additive representations.
\end{remark}

\subsection{Partial orders}

Another influential refinement of GLMs is the "two-stage Luce model" proposed by \cite{echenique2019general}. The authors focus on choice correspondences rationalizable by a strict partial order but treat first and second-stage choices independently. However, partial orders provide a natural basis for an alignment property, i.e. $x \succ y$ implies $u(x) > u(y)$. Two brief comments. First, note that alignment with $\succ$ is a weaker property than alignment with $R$ only conditional on the existence of an acyclic $R$. For example, consider the strict partial order $x \succ y$ and $z \succ w$, and the utility function $u(x) > u(z) > u(w) > u(y)$ that is aligned with $\succ$. One can notice that this utility function is not aligned with $R$ because $y R w$ and $u(w)>u(y)$. Second, partial orders are the least rational binary relation compatible with regularity. This fact follows by combining the following observations: (i) path-independence is equivalent to property $\alpha$ and the Outcast condition (\cite{moulin}); (ii) property $\alpha$, $\gamma$, and the Outcast condition are necessary and sufficient conditions for $\succ$ to be transitive \citep{aizerman}; (iii) our theorem \ref{thm: existence} which proves the necessity of path-independence for regularity. Therefore, if an (acyclic) binary relation that rationalizes the data exists then regularity is satisfied only if transitivity holds which in turn implies that the following results can be equivalently stated by substituting "acyclic binary relation" with "partial order". 

\begin{proposition}
Let $c$ be rationalizable by an acyclic binary relation. Path independence is satisfied if, and only if, there is a logit utility function $u$ aligned with $\succ$ such that $p_{u|c}$ is regular. 
\end{proposition}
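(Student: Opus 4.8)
The plan is to leverage the earlier machinery of Theorem \ref{thm: existence} together with the structural remarks (i)--(iii) stated immediately before the proposition. The key observation is that the proposition differs from Theorem \ref{thm: existence} in two ways: the hypothesis assumes $c$ is rationalizable by an acyclic binary relation, and the conclusion asks for a logit rule aligned with $\succ$ (the partial order extracted from the rationalizing relation) rather than aligned with $R$. So the proof should proceed by showing that under the acyclicity hypothesis, path-independence forces the rationalizing relation to actually be a partial order, after which alignment with $\succ$ and alignment with $R$ coincide in the relevant direction, letting me invoke Theorem \ref{thm: existence} directly.

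First I would establish the ``only if'' direction. Assume $c$ is rationalizable by an acyclic binary relation and that path-independence holds. By remark (i), path-independence is equivalent to property $\alpha$ together with the Outcast condition \citep{moulin}. By remark (ii), property $\alpha$, $\gamma$, and the Outcast condition together are necessary and sufficient for the rationalizing relation $\succ$ to be transitive \citep{aizerman}; since a rationalizable correspondence automatically satisfies property $\gamma$ (binariness), path-independence thus upgrades the acyclic relation to a transitive one, i.e. a strict partial order. Now that $\succ$ is a partial order rationalizing $c$, I would check that the utility function $u$ built in the sufficiency proof of Theorem \ref{thm: existence} --- enumerating alternatives by a linear extension of $R$ and setting $u(x_i) = 2^i$ --- is in fact aligned with $\succ$. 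This holds because whenever $x \succ y$ the alternative $x$ is chosen while $y$ is rejected in the pairwise menu $\{x,y\}$, so $x R y$ and hence $u(x) > u(y)$; thus alignment with $R$ entails alignment with $\succ$, and the regular logit rule produced by Theorem \ref{thm: existence} already serves as the desired $\succ$-aligned rule.

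For the ``if'' direction I would argue contrapositively: suppose path-independence fails. By remark (iii), which is exactly the necessity half of Theorem \ref{thm: existence}, path-independence is necessary for the existence of \emph{any} regular logit rule $p_{u|c}$, whether or not it is aligned with $\succ$. Since a $\succ$-aligned regular rule is in particular a regular rule, its existence would force path-independence, contradicting our assumption. Hence no $\succ$-aligned regular logit rule can exist when path-independence fails, which completes the equivalence.

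The main obstacle I anticipate is the bookkeeping around the two alignment notions. One must be careful that alignment with $\succ$ is genuinely weaker than alignment with $R$ (as the counterexample $x \succ y$, $z \succ w$ with $u(x) > u(z) > u(w) > u(y)$ in the preceding discussion shows), so the two are \emph{not} interchangeable in general; the equivalence only goes through because the specific utility function from Theorem \ref{thm: existence} is aligned with $R$, and $R$-alignment implies $\succ$-alignment whenever $\succ$ is the partial order underlying $c$. The delicate point is therefore to confirm that the transitivity forced by path-independence is exactly what makes the strong ($R$-)alignment of the Theorem \ref{thm: existence} construction available, so that the weaker $\succ$-alignment demanded by the proposition is satisfied for free rather than needing a fresh construction.
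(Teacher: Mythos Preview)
Your ``if'' direction is sound: the necessity half of Theorem~\ref{thm: existence} (namely Lemma~\ref{lemma: pi}) does not use the GTLM hypothesis, so the existence of any regular logit rule --- $\succ$-aligned or not --- forces path-independence.

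The ``only if'' direction, however, has a genuine gap. You want to borrow the construction from Theorem~\ref{thm: existence}, which enumerates $X$ along a linear extension of $R$ and sets $u(x_i)=2^i$. But that theorem is stated \emph{within GTLMs}, i.e.\ it presupposes that $R$ is acyclic. The hypothesis of the present proposition --- that $c$ is rationalizable by an acyclic $\succ$ --- does not deliver this, even after path-independence upgrades $\succ$ to a partial order. The paper's own example just before the proposition already shows why: with the partial order given by $x\succ y$, $z\succ w$ only, one has $y \in c(\{y,z,w\})$ with $w$ rejected, so $y R w$, while $w \in c(\{x,y,w\})$ with $y$ rejected, so $w R y$. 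Thus $R$ is cyclic, no linear extension of $R$ exists, and the $R$-aligned exponential utility you plan to reuse is simply undefined. Your implication ``$R$-alignment entails $\succ$-alignment'' is correct in the abstract but vacuous here, because the object you want it to apply to does not exist.

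The paper therefore cannot recycle Theorem~\ref{thm: existence} and instead rebuilds the argument from $\succ$ directly: it enumerates $X$ along a linear extension of $\succ$ (which does exist), sets $u(x_i)=2^i$, and proves regularity by reducing to the one-step comparison of $A$ with $A\cup\{w\}$. Property~$\alpha$ forces $c(A\cup w)\setminus c(A)\subseteq\{w\}$, and rationalizability by $\succ$ then gives $w\succ z$ for every $z\in c(A)\setminus c(A\cup w)$; the exponential choice of $u$ makes $u(w)$ alone dominate the total mass of the dropped alternatives. This one-element reduction is the new ingredient replacing Case~3 of Theorem~\ref{thm: existence}, whose original argument leaned on the threshold structure that is unavailable outside GTLMs.
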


\begin{proof}
Order the elements of $X$ as $\{x_1,...,x_n\}$ such that $x_i \succ x_j$ implies $i > j$. Define a utility function $u$, as in the proof of theorem $\ref{thm: existence}$, by $u(x_i)=2^i$. We show that $p_{u|c}$ is regular. It suffices to show that $p_{u|c}(x,A) \geq p_{u|x}(x,A \cup w)$ for all $x \in A$, $w \in X \setminus A$ and $A \subseteq X$. Let $x \in A \subseteq X$ and $w \in X \setminus A$. We split into three cases as in the proof of \ref{thm: existence}. The first two cases follow similar reasoning. We show case 3. Assume that $c(A) \setminus c(A \cup w) \neq \emptyset$ and $c(A \cup w) \setminus c(A) \neq \emptyset$. First, note that $c(A \cup w) \setminus c(A) = \{w\}$. Let $x \in c(A \cup w) \setminus c(A)$ then $x \in A \cup w \setminus A =\{w\}$. Since if not then $x \in A$. Property $\alpha$ and $x \in c(A \cup w)$ then implies that $x \in c(A)$. A contradiction. Thus $c(A \cup w) \setminus c(A) \subseteq \{w\}$ and since $c(A \cup w) \setminus c(A) \neq \emptyset$ we must have $c(A \cup w) \setminus c(A) = \{w\}$. We next show that there is an element $y \in c(A \cup w) \setminus c(A )$ such that $u(y) > u(z)$ for all $z \in c(A) \setminus c(A \cup w)$. To see this, note that for each $z \in c(A) \setminus c(A \cup w)$ there is an $x \in A \cup w \setminus A =\{w\}$ with $x \succ z$, i.e. $w \succ z$. That is, $w \succ z$ for all $z \in c(A) \setminus c(A \cup w)$. Since $\{w\} = c(A \cup w) \setminus c(A)$ we are done.
\end{proof}

\begin{lemma} \label{lemma: theta} If property $\alpha$ holds then property $\theta$ holds if and only if for all $A \subset X$ and $w \in X \setminus A$ if $c(A \cap w) \cap c(A) \neq \emptyset$ then $|c(A\cup w)| \geq |c(A)|$. \end{lemma}

\begin{proof} Necessity is obvious. We next prove sufficiency. The proof is by induction on the cardinality of $B \setminus A$ for $A \subseteq B$. If $|B\setminus A |=1$ then $|c(B)| \geq |c(A)|$ by assumption. Suppose that property $\theta$ holds for all $A \subseteq B$ with $c(A) \cap c(B) \neq \emptyset$ and $|B\setminus A| \leq m-1$. Let $A \subseteq B$ with $c(A) \cap c(B) \neq \emptyset$ and $|B \setminus A|=m$.  Let $w \in B \setminus A$ then $A \subseteq B \setminus \{w\} \subseteq B$. Let $z \in c(A) \cap c(B)$ and note that property $\alpha$ implies that $z \in c(B \setminus w)$. Since $c(A) \cap c(B \setminus w) \neq \emptyset$ and $|(B \setminus \{w\}) \setminus A| \leq m-1$ it follows by induction hypothesis that $|c(B\setminus \{w\}| \geq |c(A)|$. Further, since $c(B \setminus \{w\}) \cap c(B) \neq \emptyset$ it follows by assumption that $|c(B)| \geq |c(B\setminus \{w\}|$. Taken together these two inequalities hence show that  $|c(B)| \geq |c(B\setminus \{w\}| \geq |c(A)|$. The claim is proved.  \end{proof}

\begin{proposition} \label{prop: succ}
Let $c$ be rationalized by an acyclic binary relation $\succ$. Property $\theta$ is satisfied if and only if for all logit utility functions $u$ aligned with $\succ$, $p_{u \vert c}$ is regular.
\end{proposition}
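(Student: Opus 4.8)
The plan is to mirror the proof of Theorem~\ref{theorem: R}, exploiting that when $c$ is rationalized by $\succ$ we have $c(A) = \max(\succ, A)$, so that property $\alpha$ holds \emph{automatically}: if $x \in c(B)$ and $x \in A \subseteq B$, then nothing in $B$ (a fortiori nothing in $A$) dominates $x$, whence $x \in c(A)$. With $\alpha$ for free, I will invoke Lemma~\ref{lemma: theta} to reduce property $\theta$ to its single-element form and prove regularity by the one-element reduction: it suffices to show $p_{u|c}(x, A) \geq p_{u|c}(x, A \cup w)$ for every $x \in A$ and $w \in X \setminus A$, since an arbitrary pair $A \subseteq B$ is reached by adjoining the elements of $B \setminus A$ one at a time, and regularity composes transitively along the chain (the fixed $x$ lies in every intermediate set).

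For \emph{sufficiency}, I would fix $A$ and $w \notin A$ and record the structural effect of adding $w$ to a set of the form $\max(\succ,\cdot)$. First, $c(A \cup w) \setminus c(A) \subseteq \{w\}$, because any element of $c(A \cup w)$ lying in $A$ is maximal in $A \cup w \supseteq A$, hence in $c(A)$. Second, $c(A) \setminus c(A \cup w) = \{a \in c(A) : w \succ a\}$, since an element of $c(A)$ can be expelled only by the newly added $w$ (maximality in $A$ rules out any dominator inside $A$). The crucial point is that the departing elements are dominated by $w$ \emph{directly}, so alignment with $\succ$ already gives $u(w) > u(a)$ with no chain argument. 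Writing $\delta = \sum_{z \in c(A \cup w)} u(z) - \sum_{z \in c(A)} u(z)$, we get $\delta = [w \in c(A\cup w)]\,u(w) - \sum_{a \in c(A):\, w \succ a} u(a)$, where $[\,\cdot\,]$ is the indicator. Whenever $c(A) \cap c(A \cup w) \neq \emptyset$, Lemma~\ref{lemma: theta} (through $\theta$) forces $|\{a \in c(A): w \succ a\}| \leq [w \in c(A \cup w)]$, so the subtracted sum has at most one term and it is beaten by the single added $u(w) > u(a)$ (and both sides vanish when $w$ is itself dominated). Hence $\delta \geq 0$, i.e. $\sum_{c(A)} u \leq \sum_{c(A \cup w)} u$, which is exactly $p_{u|c}(x, A) \geq p_{u|c}(x, A \cup w)$ for each $x \in c(A) \cap c(A \cup w)$; the remaining $x$ give $p_{u|c}(x, A\cup w)=0$ and are trivial. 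When $c(A) \cap c(A \cup w) = \emptyset$, property $\alpha$ forces $c(A \cup w) = \{w\}$ and regularity is immediate.

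For \emph{necessity}, I would suppose $\theta$ fails, giving $A \subseteq B$ with $c(A) \cap c(B) \neq \emptyset$ and $|c(A)| > |c(B)|$, and reuse the strongly concave utility from Proposition~\ref{pro: conc}: enumerate $X = \{x_1, \dots, x_n\}$ so that $x_i \succ x_j$ implies $i > j$, and set $u(x_i) = k - 2^{-i}$ with $k > 1$, which is aligned with $\succ$. The inequality proved there shows $|S| > |S'|$ implies $\sum_{i \in S} u(x_i) > \sum_{j \in S'} u(x_j)$; taking $S, S'$ to index $c(A), c(B)$ yields $\sum_{c(A)} u > \sum_{c(B)} u$, so any $x \in c(A) \cap c(B)$ has $p_{u|c}(x, A) < p_{u|c}(x, B)$, contradicting regularity.

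The hard part will be the sufficiency direction, and specifically pinning down that the alternatives expelled from $c(A)$ when a single $w$ is adjoined are all \emph{directly} $\succ$-dominated by $w$. This is precisely what allows mere alignment with $\succ$ (weaker than the alignment with $R$ used in Theorem~\ref{theorem: R}) to suffice, and it is why the one-element reduction via Lemma~\ref{lemma: theta} is indispensable: it converts the cardinality bound of $\theta$ into the clean statement that at most one positive term $u(a)$ is removed against the single added $u(w) > u(a)$, so no delicate Hall-type matching between $c(A)\setminus c(B)$ and $c(B)\setminus c(A)$ is ever needed.
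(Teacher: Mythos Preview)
Your sufficiency argument is essentially the paper's own: both reduce to a single added element $w$, use property~$\alpha$ to get $c(A\cup w)\setminus c(A)\subseteq\{w\}$, use $\theta$ to force $|c(A)\setminus c(A\cup w)|\leq 1$, and then invoke $w\succ v$ on the (at most one) expelled $v$ so that alignment with $\succ$ alone yields $u(w)>u(v)$. Your framing with the signed difference $\delta$ and the indicator is a clean repackaging, and you are explicit about why the one-step reduction suffices, but the substance is identical.

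Your necessity argument differs from the paper's. The paper invokes Lemma~\ref{lemma: theta} to reduce a failure of $\theta$ to the single-element case $A\subset A\cup w$, and then builds a \emph{nearly-uniform} aligned utility (values in a band of width $<1/|X|$) so that the extra elements of $c(A)\setminus c(A\cup w)$ outweigh the small gain from $u(w)$. You instead work directly with arbitrary $A\subseteq B$ and reuse the strongly concave utility $u(x_i)=k-2^{-i}$ from Proposition~\ref{pro: conc}, whose key inequality $|S|>|S'|\Rightarrow\sum_{S}u>\sum_{S'}u$ delivers the regularity violation without any reduction. Both are correct; your route is shorter and avoids Lemma~\ref{lemma: theta} on the necessity side, while the paper's construction is self-contained and does not cross-reference Proposition~\ref{pro: conc}.
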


\begin{proof}

Since $c$ is rationalized by an acyclic binary relation $\succ$ it satisfies property $\alpha$. We show that property $\alpha$ and $\theta$ imply that $p_{u|c}$ is regular for all aligned logit rules. Let $u$ be an aligned logit rule with utility $u: X \to (0, \infty)$. Let $x \in A \subseteq X$ and $w \in X \setminus A$. Then property $\alpha$ implies that $c(A) \cap c(A \cup w) \not = \emptyset$ so that by property $\theta$ we have $|c(A)\setminus c(A \cup w) | + |c(A) \cap c(A \cup w)| = |c(A)| \leq |c(A \cup w)| = |c(A \cup w) \setminus c(A)| + |c(A) \cap c(A \cup w)|$, i.e. $|c(A) \setminus c(A \cup w)| \leq |c(A \cup w) \setminus c(A)|$. By property $\alpha$ we have that $c(A \cup w) \cap A \subseteq c(A)$. Thus $c(A \cup w) \setminus c(A) \subseteq \{w\}$.

\vspace{5mm}
\noindent \textbf{Case 1.} If $c(A) \setminus c(A \cup w) = \emptyset$ then $c(A) \subseteq c(A \cup w)$ and regularity clearly holds. 

\vspace{5mm}
\noindent \textbf{Case 2.} If $c(A) \setminus c(A \cup w)\neq \emptyset$ then $|c(A) \setminus c(A \cup w)| \leq |c(A \cup w) \setminus c(A)|$ and $c(A \cup w) \setminus c(A) \subseteq \{w\}$ implies that  $|c(A) \setminus c(A \cup w)| = |c(A \cup w) \setminus c(A)| =1$. Let $v \in A$ be s.t. $\{v\} = c(A) \setminus c(A \cup w)$. Since $ v \in c(A) \setminus c(A \cup w)$ there is an $x \in A \cup w \setminus A =\{w\}$ with $x \succ v$, i.e. $w \succ v$. Hence,
$$\sum\limits_{y \in c(A)} u(y) = \sum\limits_{y \in c(A)\cap c(A \cup w)} u(y) + u(v) <$$
$$\sum\limits_{y \in c(A)\cap c(A \cup w)} u(y) + u(w) = \sum\limits_{y \in c(A \cup w)} u(y),$$
where first inequality follows since $u(w) > u(v)$ (here we use that $u$ is aligned with $\succ$).
Conversely, let $c$ be a correspondence and assume that $p_{u|c}$ is regular for all logit utility functions $u: X \to (0, \infty)$. It is clear that $c$ satisfies property $\alpha$. It thus suffices to show that $c$ satisfies property $\theta$. Lemma \ref{lemma: theta} also implies that we only need to check property $\theta$ for menus $A \subseteq B$ with $|B \setminus A|=1$. Assume, by way of contradiction, that $A \subseteq X$, $w \in X \setminus A$ and $c(A) \cap c(A \cup w) \not= \emptyset$ but that $|c(A)| > |c(A \cup w)|$. By similar arguments as above, $|c(A) \setminus c(A \cup w)| > |c(A \cup w) \setminus c(A)|$. Now, there are two cases again. Either $c(A \cup w) \setminus c(A) = \emptyset$ and regularity is clearly violated (since $|c(A) \setminus c(A \cup w)|>0$ implies that $c(A \cup w)$ is a strict subset of $c(A)$). If $c(A \cup w) \setminus c(A) \neq \emptyset$ then $c(A \cup w) \setminus c(A) = \{w\}$.  Let $u$ be a utility function aligned with $\succ$ such that  $v(x) \geq 1$ for all $x \in X$ and $| \max_{x \in X} u(x) - \min_{x \in X} u(x)| < 1/|X|$. Fix some $v \in c(A ) \setminus c(A \cup w)$. Then:
$$\sum\limits_{y \in c(A \cup w)} u(y) = \sum\limits_{y \in c(A)\cap c(A \cup w)} u(y) + u(w) <$$
$$\sum\limits_{y \in c(A)\cap c(A \cup w)} u(y) + u(v) + \frac{1}{|X|}  < $$ $$ \sum\limits_{y \in c(A)\cap c(A \cup w)} u(y) + \sum\limits_{y \in c(A) \setminus c(A \cup w)} u(y)= \sum\limits_{y \in c(A )} u(y)$$

The first inequality follows since $u(w) \leq u(v) + \frac{1}{|X|}$ and the second inequality follows since  $|c(A) \setminus c(A \cup w)| > 1$ and $u(y) \geq 1 > \frac{1}{|X|}$ for all $y \in c(A ) \setminus c(A \cup w)$. Hence, we have proved that a logit rule such that $p_{u |c }$ is irregular exists, a contradiction.
\end{proof}

\begin{remark}
Property $\theta$ provides regularity for all aligned logit rules both in GTLMs and when there is a partial order that rationalizes the choices. However, the proofs of theorem \ref{theorem: R} and proposition \ref{prop: succ} are different in their structure. Specifically, theorem \ref{theorem: R} holds over any collection of menus while Proposition \ref{prop: succ} may fail if some menus are not observed. To clarify, suppose we observe only two menus $B=\{x,y,z,w,t\}$ and $A=\{z,w,t\}$, $c(B) = x,y,t$ and $c(A) = z,w,t$. These choices are compatible with the partial order $y \succ w,z$ which allows the utility function $u(y) > u(z),u(w) > u(x)$. Here, if $u(y) + u(x) < u(z) + u(w)$, regularity is violated because $p_{u|c}(t,B) > p_{u|c}(t,A)$. A sufficient richness condition is the observability of all non-empty menus which in this example would provide a violation of property $\theta$ as $|c(x,t,z,w)| > |c(B)|$. This detail connects our paper to the recent work of \cite{alos2025characterization, rodrigues2024stricter} who characterize GLMs over arbitrary collections of menus.
\end{remark}

\subsection{Strong concavity/convexity and single-peakedness/dippedness}

\cite{petri2023binary} calls a preference $P$ quasi-concave w.r.t. a linear order $\succ$ if for all $x,y,z \in X$ s.t. $x \succ y \succ z$ or $z \succ y \succ x$ it holds that $y P x$ or $y P z$. Quasi-convexity is defined by modifying the consequent to $x P y$ or $z P y$. Quasi concavity is equivalent to single-peakedness and quasi convexity is equivalent to single-dippedness (see \cite{apesteguia2017single}, \cite{petri2023binary}).\footnote{As shown by the authors, single-peakedness is also equivalent to the novel definition of convexity introduced by \cite{richter2019convex}.} Call a preference $P$ strongly concave w.r.t. an order $\succ$ if $P$ is represented by a strongly concave utility function $v$ w.r.t. $\succ$. 

\begin{proposition}  If a preference $P$ strongly concave/convex w.r.t. $\succ$ then it is quasi concave/convex w.r.t. $\succ$. \end{proposition}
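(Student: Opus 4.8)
The plan is to unpack both definitions and observe that strong concavity delivers a \emph{midpoint} inequality that is already stronger than what quasi-concavity demands. Recall that $P$ strongly concave w.r.t.\ $\succ$ means $P$ is represented by a utility $v$ with $v(x)-v(y)\leq v(y)-v(z)$ whenever $x\succ y\succ z$, whereas $P$ quasi-concave w.r.t.\ $\succ$ requires that whenever $y$ lies strictly $\succ$-between $x$ and $z$ one has $y P x$ or $y P z$, i.e.\ (reading $P$ as the weak preference induced by $v$) that $v(y)\geq\min\{v(x),v(z)\}$.

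First I would fix a triple with $y$ strictly $\succ$-between $x$ and $z$. The defining condition for quasi-concavity allows the two configurations $x\succ y\succ z$ and $z\succ y\succ x$, but these are interchanged by swapping the labels $x$ and $z$, so it suffices to treat $x\succ y\succ z$. Applying the strong-concavity inequality to this ordered triple and rearranging yields $2v(y)\geq v(x)+v(z)$, whence $v(y)\geq\tfrac12\bigl(v(x)+v(z)\bigr)\geq\min\{v(x),v(z)\}$. Thus $v(y)\geq v(x)$ or $v(y)\geq v(z)$, i.e.\ $yPx$ or $yPz$, which is exactly quasi-concavity.

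The convex case is perfectly dual and I would dispatch it in one line: for $x\succ y\succ z$ strong convexity gives $v(x)-v(y)\geq v(y)-v(z)$, hence $v(y)\leq\tfrac12\bigl(v(x)+v(z)\bigr)\leq\max\{v(x),v(z)\}$, so $xPy$ or $zPy$, the defining condition of quasi-convexity (single-dippedness). I do not anticipate any real obstacle here: the whole argument is a single rearrangement of the defining inequality, and the only care needed is (i) reducing the two $\succ$-orderings of the triple to one by symmetry, and (ii) aligning the weak preference $P$ with the weak utility comparison $v(y)\geq\min\{v(x),v(z)\}$. If one wished to be exhaustive one could also note, via the equivalences cited in the statement, that this furnishes a second route to the conclusion that strongly concave/convex utilities are single-peaked/dipped in the sense of \cite{apesteguia2017single} and \cite{petri2023binary}.
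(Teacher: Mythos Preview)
Your argument is correct and follows essentially the same route as the paper: both proofs rearrange the strong-concavity inequality $v(x)-v(y)\leq v(y)-v(z)$ into $2v(y)\geq v(x)+v(z)$ and then read off quasi-concavity. The only cosmetic differences are that the paper argues by contradiction (assume $xPy$ and $zPy$, derive $v(x)+v(z)>2v(y)$) whereas you argue directly via $v(y)\geq\min\{v(x),v(z)\}$, and the paper treats the case $z\succ y\succ x$ separately while you dispose of it by relabelling. One small caveat: you read $P$ as the weak preference, while the paper's contradiction step writes strict inequalities; neither proof is fully explicit about whether ties in $v$ can occur, but under either reading the core inequality $2v(y)\geq v(x)+v(z)$ does the work.
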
 

\begin{proof} Assume $P$ is strongly concave w.r.t. $\succ$. Let $v$ be a strongly concave utility representation of $P$. If $x \succ y \succ z$  then $v(x)-v(y) \leq v(y) - v(z)$, and hence $v(x) + v(z) \leq 2v(y)$. Assume that $x P y$ and $z P y$ then  $v(x) > v(y)$ and $v(z) > v(y)$, so $v(x) + v(z) > 2v(b)$. A contradiction. If $z \succ y \succ x$ then $v(z)- v(y) \geq v(y) - v(x)$ and we have a contradiction by similar reasoning as before. \end{proof}

\section{Omitted Proofs} \label{app: omitted}

\subsection{Proof of Proposition \ref{prop: delta}}

The following observation is immediate and will be used frequently below.

\begin{lemma} A function $v: X \to (0,\infty)$ is \emph{strongly concave} if and only if there is a strict linear order $\succ$ s.t. for all $x,y,z \in X$: if $x \succ y \succ z$ then $v(x)-v(y) \leq v(y) - v(z)$.  \end{lemma}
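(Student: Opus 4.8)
The plan is to observe that this statement is essentially a reformulation of the definition of strong concavity for a single function $v$, so that the only real work is to align the index-based quantifier appearing in that definition with the order-based quantifier in the lemma. I would make this alignment explicit through the standard dictionary between enumerations of a finite set and strict linear orders on it.

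First I would recall the definition: $v$ is strongly concave exactly when there is an enumeration $x_1,\dots,x_n$ of $X$ such that the function $f$ on $\{1,\dots,n\}$ given by $f(m)=v(x_m)$ satisfies $f(m)-f(k)\le f(k)-f(l)$ whenever $m>k>l$. The key device is the bijection between enumerations of $X$ and strict linear orders $\succ$ on $X$: from an enumeration, set $x_i\succ x_j$ iff $i>j$; conversely, any strict linear order yields its unique increasing enumeration $x_1\prec\cdots\prec x_n$. Under this identification, the clause ``$x\succ y\succ z$'' corresponds precisely to a triple of indices $m>k>l$ with $x=x_m$, $y=x_k$, $z=x_l$.

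For the forward direction I would take the enumeration witnessing strong concavity of $v$, define $\succ$ by $x_i\succ x_j\iff i>j$, and note that for any $x\succ y\succ z$ the strong-concavity inequality $f(m)-f(k)\le f(k)-f(l)$ reads verbatim as $v(x)-v(y)\le v(y)-v(z)$. For the converse, given a strict linear order $\succ$ with the three-point property, I would enumerate $X$ in increasing $\succ$-order and set $f(m)=v(x_m)$; any indices $m>k>l$ then give $x_m\succ x_k\succ x_l$, so the hypothesis yields $f(m)-f(k)\le f(k)-f(l)$, i.e.\ $f$ is strongly concave, whence $v$ is strongly concave by definition.

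The only point requiring care is the orientation of the order: strong concavity of $f$ records \emph{decreasing} increments as the index grows, so I must align ``$\succ$-greater'' with ``larger index'' (equivalently, read the order off from the increasing enumeration) for the two inequalities to coincide without a sign flip. Beyond pinning down this convention, there is no genuine obstacle — once the index/order dictionary is fixed, both implications are immediate unwindings of the definitions, which is why the result can be used freely below as an ``immediate'' observation.
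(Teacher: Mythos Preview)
Your proposal is correct and matches the paper's treatment: the paper states this lemma without proof, calling it ``immediate,'' and your argument is exactly the immediate unwinding of the definition via the enumeration/linear-order dictionary that the authors have in mind. Your remark about fixing the orientation (``$\succ$-greater'' $\leftrightarrow$ larger index) is the only point of substance, and you handle it correctly.
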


\noindent We first show that monotonicity of $\varepsilon$ implies property $\alpha$. Let $x \in B \subseteq A$ and $x \in c(A)$ then $\max_{y \in A} u(y) - u(x) \leq \varepsilon(A)$ and hence $x \in c(B)$.

\vspace{5mm}
\noindent Conversely, let $v: X \to (0, \infty)$ be a function, such that for all $x,y \in X$: $x R y$ implies $v(x) > v(y)$, and such that $v$ is strongly concave w.r.t. $R$.   

\vspace{5mm}

\noindent To simplify the notation, for all menus $A \subseteq X$, we define $x^{A} = \arg\max_{x \in c(A)} v(x)$, $x_{A} = \arg\min_{x \in c(A)} v(x)$. Note that acyclicity of $R$ implies that $v(x^A)= \max_{y \in A} v(y)$ for all $A \subseteq X$. We will use this fact repeatedly below without explicit mention. 

\vspace{5mm}

\noindent The threshold map is constructed recursively, starting with the grand set $X$, and setting $\varepsilon(X) = v(x^X) - v(x_X)$. Suppose that the thresholds have been defined for all sets $B \subseteq X$ with $|B| \geq k$. Let $A \subseteq X$ be a set with cardinality $k-1$ and set $$\varepsilon(A) = \max \{ \max_{B \subseteq X : B \supset A} \varepsilon(B), v(x^X) - v(x_A) \}.$$ By construction of the thresholds, it is now clear that if $A \subseteq B$ then $$\varepsilon(A) \geq  \max_{C \subseteq X : C \supset A} \varepsilon(C) \geq \varepsilon(B),$$ so the threshold map $\varepsilon$ is monotone in set inclusion. 

\vspace{5mm}

\noindent We next show by induction that our representation is satisfied in every menu. 

\vspace{5mm}

\noindent BASE CASE:  As a base case we show that $c(X)=  \{x \in A : \max_{y \in X} v(y) -v(x) \leq \varepsilon(X)\}$. Let $x \in c(X)$ then $v(x) \geq v(x_X)$ and hence $v(x^X) - v(x) \leq v(x^X) - v(x_X) = \varepsilon(X)$. Conversely, if $v(x^X) -v(x) \leq \varepsilon(X)$ then it must be the case that $x \in c(X)$. If not, then $x \notin c(X)$ and $x_X \in c(X)$ implies that $v(x_X) > v(x)$. Hence $v(x^X) - v(x) > v(x^X)-v(x_X) = \varepsilon(X)$. A contradiction.

  \vspace{5mm}
 
\noindent As induction hypothesis assumes that the representation holds for all sets $B \subseteq X$ with $|B| \geq k+1$. Let $A \subseteq X$ be a set with $|A|=k$ and set $y^* = \arg\max_{x \in A \setminus c(A)} v(x)$. We show that  $v(x^A) -v(y^*) >  \varepsilon(A)$ by dividing into two cases: 
 
 \vspace{5mm}
 
\noindent CASE 1: Assume that $ v(x^X) - v(x_A) = \varepsilon(A)$. Then since $x_A \in c(A)$ and $y^* \notin c(A)$ and since $v$ is a linear extension of $R$, it follows that $v(x_A) > v(y^*)$. By strong concavity of $v$ we hence have that $v(x^A) - v(y^*) > v(x^X) - v(x_A) = \varepsilon(A)$. 


\vspace{5mm}

\noindent CASE 2: Assume that  $\varepsilon(A) > v(x^X) - v(x_A)$ then by construction of the thresholds we have that  $ \max_{C \subseteq X : C \supset A} \varepsilon(B) = \varepsilon(A) > v(x^X) - v(x_A)$. Let $B \subseteq X$ with $v(x^X) -v(x_B)=\varepsilon(B) = \max_{C \subseteq X : C \supset A} \varepsilon(B) $. As a preliminary observation we note that $v(y^*) < v(x_B)$. Suppose that $v(y^*) \geq v(x_B)$ then since $y^* \in A \subseteq B$ it must be that $y^* \in c(B)$ and by property $\alpha$ that $y^* \in c(A)$. A contradiction.

\vspace{5mm}

\noindent We next use the preceding observation to show that $v(x^A) -v(y^*) >  \varepsilon(A)$. But, this follows from the following chain of inequalities $v(x^A) - v(y^*) \geq v(x_B) - v(y^*) > v(x^X) - v(x_B) = \varepsilon(B) = \varepsilon(A)$. The first inequality above holds since $v(x^A) \geq v(x_B)$. To see this, note that $\varepsilon(B) = \varepsilon(A) >  v(x^X) - v(x_A) \geq v(x^B) - v(x_A)$. Since $|B| > |A|$ it then follows by the induction hypothesis that $v(x_A) \in c(B)$ and hence $v(x_A) \geq v(x_B)$. Thus, $v(x^A) \geq v(x_A) \geq v(x_B)$. The second inequality holds since either $v(x^X) =v(x^B)$ and then since $v(y^*) < v(x_B)$ it follows that $v(x^B)- v(y^*) > v(x^X) - v(x_B)$. Otherwise, $v(x^X) > v(x^B)$ and then since $v(x^B) > v(y^*)$ it follows by strong concavity that $v(x_B) - v(y^*) > v(x^X) - v(x_B)$. 

\vspace{5mm}

\noindent Since $v(x^A) -v(y^*) >  \varepsilon(A)$ it follows that $c(A) = \{x \in A : \max_{z \in A} v(z) -v(x) \leq \varepsilon(A)\}$. To see this, let $x \in c(A)$ then $v(x) \geq v(x_A)$ and hence $v(x^A) - v(x) \leq v(x^A) - v(x_A) \leq v(x^X) - v(x_A) = \varepsilon(A)$. Conversely, if $x \notin c(A)$ then $v(x) \leq v(y^*)$ hence $ \max_{z\in A} v(z) - v(x) = v(x^A) -v(x) \geq v(x^A) - v(y^*) > \varepsilon(A)$, so $x \notin \{x \in A : \max_{z \in A} v(z) -v(x) \leq \varepsilon(A)\}$.



\bibliographystyle{apa-good}

\bibliography{bibliography}

\end{document}